\documentclass[a4 paper,11pt]{amsart}

\usepackage{amsmath,amssymb,amscd,amsthm} 
\usepackage{mathrsfs,euscript,ulem,mathabx}
\usepackage[alphabetic]{amsrefs}

\usepackage[dvipdfmx]{graphicx} 
\usepackage{hyperref}
\usepackage{xcolor}
\definecolor{unbleu}{rgb}{0.03, 0.15, 0.4}
\hypersetup{
pdfborder = {0 0 0},
colorlinks,
linkcolor=unbleu,
citecolor=unbleu,
urlcolor=unbleu
}

\usepackage{fancyhdr}

\usepackage{scalerel}

%%%Theorems%%%%%%%%%%%%%%%%%%%%%%%%%%%%%%%%%%%%%%%%%%%%%%%%
%\theoremstyle{} plain, definition, remark
\newtheorem{theorem}{Theorem}[section]
\newtheorem{lemma}[theorem]{Lemma}
\newtheorem{proposition}[theorem]{Proposition}

\theoremstyle{definition}
\newtheorem{definition}[theorem]{Definition}

\newtheorem{example}[theorem]{Example}

%%% Basic Macro%%%%%%%%%%%%%%%%%%%%%%%%%%%%%%%%%%%%%%%%
%     overline
%  wide hat
%  wide tilde
%  underline
%   wide tilde

%     imaginary number
\newcommand{\R}{\mathbb R}%     real number
%     rational number
\newcommand{\Z}{\mathbb Z}%      integer number
\newcommand{\N}{\mathbb N}%     natural number
%      torus
\newcommand{\dd}{{\,\mathrm d}} % to write correctly dx, dmu, etc
\newcommand{\e}{\operatorname{e}} % to make e^ correctly

\newcommand{\boldi}{\boldsymbol{i}}
\newcommand{\boldj}{\boldsymbol{j}}
\newcommand{\boldk}{\boldsymbol{k}}

%%%-color
\newcommand{\red}[1]{\textcolor{red}{#1}}

%%% Pour emp\UTF{0090}cher qu'apparaissent les MR numbers du fichier bibtex.
\AtBeginDocument{%
   \def\MR#1{}
}

%%%%%%%%%%%%%%%%%%%%%%%%%%%%%%%%%%%%%%%%%%%%%%%%%%%%%%%%%%%%%%%%%%%%%%%%%%%%%%%%%%%%
\begin{document}

%\title[]{{\small On the zero-temperature limit of Gibbs states for finite-range interactions on the lattice $\Z^2$}}
\title[]{{\small On the absence of zero-temperature limit of equilibrium states for finite-range interactions on the lattice $\Z^2$}}
%Gibbs measures for finite-range interactions on the lattice $\Z^2$ without zero-temperature limit

\author[J. R. Chazottes]{Jean-Ren\'e Chazottes}

\address{Centre de Physique Th\'eorique, CNRS, Institut Polytechnique de Paris, France}

\email{jeanrene@cpht.polytechnique.fr}

\author[M. Shinoda]{Mao Shinoda}

\address{Department of Mathematics, Ochanomizu University, 2-1-1 Otsuka, Bunkyo-ku, Tokyo, 112-8610, Japan}

\email{shinoda.mao@ocha.ac.jp}

\thanks{We thank Mike Hochman for having drawn our attention to the work of Durand \textit{et al.} \cite{DRS12}.}

\date{\today}

%%%%%%%%%%%%%%%%%%%%%%%%%%%%%%%%%%%%%%%%%%%%%%%%%%%%%%%%%
%%%%%%%%%%%%%%%%%%%%%%%%%%%%%%%%%%%%%%%%%%%%%%%%%%%%%%%%%
\begin{abstract}
We construct finite-range interactions on $\mathcal{S}^{\Z^2}$, where $\mathcal{S}$ is a finite set, for which the associated 
equilibrium states (i.e., the shift-invariant Gibbs states) fail to converge as temperature goes to zero. More precisely,
if we pick any one-parameter family $(\mu_\beta)_{\beta>0}$ in which $\mu_\beta$ is an equilibrium state at inverse temperature $
\beta$ for this interaction, then $\lim_{\beta\to\infty}\mu_\beta$ does not exist.  This settles a question posed by the first author and 
Hochman who obtained such a non-convergence behavior when $d\geq 3$, $d$ being the dimension of the lattice.

\smallskip
\noindent \textbf{Keywords.} lattice systems, equilibrium states, Gibbs measures, thermodynamic formalism, multidimensional subshift of finite type, Wang tilings, effective subshifts, Turing machine.
\end{abstract}

\maketitle
%\newpage

\tableofcontents
%%%%%%%%%%%%%%%%%%%%%%%%%%%%%%%%%%%%%%%%%%%%%%%%%%%%%%%%

\section{Introduction and main result}

%%%%%% SUBSEC
\subsection{Setting}

Leaving precise definitions till the next section, we work in the context of Gibbs measures for
shift-invariant absolutely summable interactions on a space of configurations of the form
$\mathcal{S}^{\Z^d}$, where $\mathcal{S}$ is a finite set. As explained below, we will in fact deal with equilibrium states, that is,
shift-invariant Gibbs measures. The problem we consider is the following. Given an interaction $\Phi$ and an inverse temperature $\beta>0$, there is a simplex of equilibrium states $\mathcal{E\!S}(\beta)$ associated with $\beta\Phi$ (which might not be a singleton for large values of $\beta$, as for instance in the Ising model). 
We ask the following:
\begin{quote}
What is the behavior of $\mathcal{E\!S}(\beta)$ when $\beta\to+\infty$?
\end{quote}
When there is a single equilibrium state $\mu_\beta$ for each $\beta$, this question is simply: Does the limit of $(\mu_{\beta})_{\beta>0}$ exist?
If it does, what is the limiting measure? (The natural topology in this problem is the weak topology, see below.) This question is 
connected with ground states. We need not explain what they are because they will play no explicit role in the present paper. 
Let us only say that a ground state for an interaction $\Phi$ is a probability measure supported on a certain closed subset of $\mathcal{S}^{\Z^d}$, 
possibly uncountable, which is invariant under the shift action (that is, a subshift), and determined by the ``maximizing configurations''
of the Hamiltonian of $\Phi$. We refer to \cite{GT2015} for details. (Notice that in this paper we use the convention sign followed in dynamical 
systems and also in \cite{Ruelle2004}, namely we ``prefer'' to maximize instead of minimizing.)

%%%%%% SUBSEC
\subsection{Known results and main theorem}

If the answer to the above question is known in a number of particular examples, notably in relation with phase transitions,
see, \textit{e.g.}, \cite{DB1985,vE-F-S,Georgii}, the general study of this problem is pretty recent, and it was started by people working 
in ergodic theory and dynamical systems. They considered `potentials' on $\mathcal{S}^{\N}$ (or $\mathcal{S}^{\Z}$) for which there is a single 
equilibrium state, which is also a Gibbs measure, for each inverse temperature \cite{Bow75}. \footnote{A caution on the terminology is in order. In 
statistical physics, an interaction or a potential is a family of real-valued functions on $\mathcal{S}^{\Z^d}$ indexed by the finite subsets of $\Z^d$. 
For equilibrium states, a function deriving from the potential appears naturally and can be interpreted as the `mean energy per site'. In dynamical 
systems, people consider $d=1$ (the shift representing time evolution), and they only consider this mean energy per site that they call a potential.}
In a nutshell, the situation is the following. For locally constant `potentials', which correspond to finite-range interactions, convergence always takes 
place, and it is possible to describe the limit measures \cite{Bre03, CGU11, Lep05}. For Lipschitz `potentials', which correspond for instance to 
exponentially decaying pair interactions (as a function of the distance between sites), there is a rather surprising negative result. 
Recall that, in this class, equilibrium states and Gibbs measures coincide, and for a given potential and for each $\beta$ there is exactly one Gibbs 
measure. It was first proved in \cite{CH10} that there do exist potentials (or interactions) in this class such that the limit of
$(\mu_{\beta})_{\beta>0}$ does not exist when $\beta\to+\infty$.
Another construction was given in \cite{CR15}.

What happens when $d\geq 2$? In sharp contrast with the case $d=1$, it was proved in \cite{CH10} that, when $d\geq 3$, then one can construct
a finite-range interaction (with $\mathcal{S}=\{0,1\}$) such that, for \textit{any} family $(\mu_\beta)_{\beta>0}$ in which $\mu_\beta$ is an 
equilibrium state for this interaction at inverse temperature $\beta$, the limit $\lim_{\beta\to\infty}\mu_\beta$ does not exist. (We will comment below
on this rather subtle statement.)
The case $d=2$ is left as an open problem in \cite{CH10} (for reasons that will be explained later on), and in this paper, we solve it. 
More precisely, our main theorem is the following.
\begin{theorem}[Main theorem] 
\label{maintheorem}
\leavevmode\\
There exists a finite set $\mathcal{S}$ and a finite-range interaction on $\mathcal{S}^{\Z^2}$, such that for any one-parameter family $(\mu_\beta)_{\beta>0}$ in 
which $\mu_\beta$ is an equilibrium state (\textit{i.e.}, a shift-invariant Gibbs measure) at inverse temperature $\beta$, the limit $\lim_{\beta\to\infty}
\mu_\beta$ does not exist.
\end{theorem}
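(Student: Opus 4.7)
The plan is to adapt the Chazottes--Hochman strategy \cite{CH10} used in dimension $d\geq 3$, replacing the direct Turing-machine simulation (which relied on Hochman's theorem, available only for $d\geq 3$) by the simulation of one-dimensional effective subshifts as sofic $\Z^2$-subshifts due to Durand--Romashchenko--Shen \cite{DRS12}. The target is a finite-range interaction on $\mathcal{S}^{\Z^2}$ whose ground-state structure forces the equilibrium states at low temperature to oscillate between two incompatible phases along some sequence $\beta_k\to\infty$.

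First I would fix a computable $\{A,B\}$-valued sequence $(c_n)$ that does not converge, for instance $c_n=\lfloor\log_2 n\rfloor\pmod 2$, and design a one-dimensional effective subshift $Y$ of hierarchical type: at each scale $n$, blocks are tagged with the symbol $c_n$, and legal configurations consist of properly nested blocks of all scales. I would then apply the DRS simulation theorem to realize $Y$ as a letter-to-letter factor of a two-dimensional SFT $X\subset \mathcal{S}^{\Z^2}$, and define a finite-range interaction $\Phi$ whose hard core enforces $X$ and whose soft part is a small local term penalizing, at the scale-$n$ features exposed by the simulation, configurations whose simulated tag disagrees with the prescribed $c_n$.

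Next, I would analyze the equilibrium states via a standard energy-vs-entropy balance. For $\beta$ large, the excess energy forces typical configurations to look like ground states up to some macroscopic length $L(\beta)$. Within this window the simulated hierarchy exposes some number $N(\beta)$ of scales, and the reward term at scale $N(\beta)$ selects the phase $c_{N(\beta)}$ up to a controllable entropic error. A Peierls-type estimate using the combinatorics of the DRS simulation should give $N(\beta)\to\infty$ as $\beta\to\infty$. Choosing $\beta_k\to\infty$ along which $c_{N(\beta_k)}$ alternates between $A$ and $B$ then yields two subsequential limits differing in the expectation of a cylinder function built to distinguish $A$-tags from $B$-tags, so no family $(\mu_\beta)$ of equilibrium states can converge.

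The main obstacle is quantitative. The DRS construction introduces an overhead of auxiliary symbols (computation cells, Robinson-style hierarchical grids) carrying their own configurational entropy at every scale. I expect the hardest step to be verifying that this hidden entropy, in the temperature window responsible for scale $N(\beta)$, is dominated by the tiny energy reward distinguishing $c_n=A$ from $c_n=B$, so that the oscillation of $(c_n)$ is genuinely transmitted to $\mu_\beta$ rather than washed out by the entropy of the machinery. Achieving this balance while keeping the alphabet finite and the range of $\Phi$ bounded is the delicate point at the heart of the argument, and it is precisely the difficulty that made the case $d=2$ resist the earlier approach of \cite{CH10}.
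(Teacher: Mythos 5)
Your high-level plan shares the paper's starting point (use the Durand--Romashchenko--Shen simulation theorem to embed a $1$D effective subshift into a $2$D SFT, then superimpose a finite-range interaction), but the mechanism you propose for producing the oscillation is different from the paper's and has a gap that I do not see how to close. You want to hard-code an alternating tag sequence $(c_n)$ into the $1$D subshift and add a ``soft'' local reward/penalty that prefers the tag $c_n$ at scale $n$. With a fixed finite-range interaction, however, there is a well-defined ground-state set: the configurations in which every scale's tag matches $c_n$. That set is a fixed subshift, and nothing in your proposal prevents the family $(\mu_\beta)$ from converging to an invariant measure supported there. Your claim that ``the reward term at scale $N(\beta)$ selects the phase'' presupposes that a bounded-range potential can be made sensitive to a $\beta$-dependent length scale, which is exactly what fails for locally constant potentials and why the $d=1$ finite-range case always converges. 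The penalty density at a level-$k$ defect is of order $\ell_k^{-2}$ and shrinks with $k$, so one does not get a temperature-tuned competition between consecutive scales; as $\beta\to\infty$ all scales are forced into agreement with $(c_n)$ and the oscillation disappears in the limit.

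The paper resolves this by abandoning an energetic tag reward entirely: the finite-range interaction only penalizes the (finitely many) forbidden patterns of the simulating SFT $\widetilde{Y}$, so $\phi\equiv 0$ on $\widetilde{Y}$ and the selection between the two ``phases'' $\widetilde{Y}_+$ and $\widetilde{Y}_-$ is purely \emph{entropic}. The $1$D base system $X=X_+\cup X_-$ is built over disjoint alphabets $\Sigma_+=\{0,+1\}$ and $\Sigma_-=\{0,-1\}$ as a nested intersection of SFTs forbidding strings with too many zeros; the parameters $(\ell_k,r_k)$ are tuned so that the admissible pattern counts satisfy $|\mathcal{P}_{X_-,\ell_k}|^{10}\leq|C_{+,k}|$ for odd $k$ and symmetrically for even $k$, and this inequality is then lifted through the simulation. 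Crucially, the ``hidden entropy of the DRS machinery'' that you flag as the main obstacle is neutralized by Proposition~\ref{computation layer}: the number of admissible computation-layer patterns over an $n$-box is a constant $c_n\leq(2n-1)^2$ \emph{independent of the input}, so it cancels in the comparison between $|G_{\widetilde{Y}_+,\ell_k}|$ and $|G_{\widetilde{Y}_-,\ell_k}|$. Your proposal names the right obstruction but offers no analogue of this cancellation, and without it (and without the entropic asymmetry replacing your energetic tag reward) the argument does not go through.
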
 

Several comments are in order. First, if there were a unique equilibrium state/Gibbs measure for each $\beta$, then there would be a unique choice 
for $\mu_\beta$, and the previous result could be formulated more transparently: there exist finite range interactions such that the limit $
\lim_{\beta\to\infty}\mu_\beta$ does not exist. But in our example we didn't look if uniqueness holds at low temperature.
Second, by compactness (in the weak topology) of the space of probability measures, if we take any sequence $(\beta_\ell)_{\ell\geq 1}$ of
inverse temperatures such that $\beta_\ell\to+\infty$, there exists a subsequence $(\ell_i)$ such that the sequence
$(\mu_{\beta_{\ell_i}})_{i\geq 1}$, in which $\mu_{\beta_{\ell_i}}$ is an equilibrium state, has a limit.
Our result, as well as the one in \cite{CH10} mentioned above, is about continuous-parameter families. Third, and last, there is
nothing new in the fact that one can choose \textit{some} divergent family of equilibrium states. Consider for instance
the nearest-neighbor Ising model in which one can choose a family which alternates, when $\beta$ is large enough, between the $+$ and $-$ phases. However, it is always possible to choose families which converge to either $\delta_-$ or $\delta_+$. In our example, \textit{it is not
possible} to choose \textit{any} family which converges to a ground state. 
Let us also mention that in \cite{CR15} such a non-convergence result (for any $d\geq 2$) was obtained, but for non-locally constant Lipschitz `potentials'. 

%%%%%% SUBSEC
\subsection{More comments}
The fact that the Gibbs measures of an interaction can behave in a `chaotic' way when temperature goes to zero seems to have been first 
proved in \cite{vE-Wioletta} for a class of examples of nearest-neighbor, bounded-spin models, in any dimension. In that example, $\mathcal{S}$
is the unit circle. The paper \cite{CH10} was the first to exhibit this kind of behavior for models with a finite number of `spin' values at each site. 
In the above mentioned paper \cite{CR15}, a stronger property is studied namely `sensitive dependence'.
Roughly, it means that the non-convergence can indeed occur along any prescribed sequence of temperatures going to zero, by making an arbitrarily small perturbation of the original interaction. We believe that our example exhibits this property but we did not try to prove it.
Finally, let us mention that we only deal with equilibrium states, that is, shift-invariant Gibbs measures. It is well known that there can exist non-shift invariant Gibbs measures at low temperature, \textit{e.g.}, in the three-dimensional Ising model where the so-called `Dobrushin states' appear
\cite{dobrushin-Ising3D,DB1985}. The situation is unclear in that case.

%%%%%% SUBSEC
\subsection{On the role of symbolic dynamics}
In \cite{CH10}, \cite{CR15} and the present work, a central role is played by symbolic dynamics, in particular the construction of subshifts with certain 
properties.
Informally, a subshift is a subset of configurations in $\mathcal{S}^{\Z^d}$ defined by a (finite or infinite) set of `patterns' which cannot appear anywhere in 
these configurations. When $d=1$, we say that we have a 1D subshift. A prominent class of 1D subshifts is that of subshifts of finite type for which there 
are finitely many 
forbidden patterns. They play a central role to `encode' certain differential dynamical systems such as Axiom A diffeomorphisms \cite{Bow75}. 
There are many 1D subshifts that are not of finite type which were introduced for various purposes (for instance the Thue-Morse subshift defined by 
substitution rules); see for instance \cite{kurka-book}.

There is a striking and dramatic difference between 1D and 2D subshifts of finite type. For instance, it is formally undecidable whether a 2D subshift of finite type 
is empty or not. This undecidability problem is closely related to the existence of nonempty shifts of finite type without periodic points, or, equivalently, the
existence of Wang tile sets (their definition is given below) such that one can tile the plane but never in a periodic fashion \cite{LS02}.

1D subshifts of finite type are closely related to the zero temperature limit of (one-dimensional) Gibbs measures of finite-range potentials: the limiting 
measure, which always exists, is necessarily supported on a subshift of finite type. The above  mentioned examples of non-convergence for non-finite-
range potentials \cite{CH10,CR15} rely on the construction of some subshifts which are necessarily not of finite type. Roughly speaking, the idea is to cook 
up two subshifts of $\mathcal{S}^{\Z}$, each carrying only one shift-invariant probability measure (among other properties), and a (non-finite-range) potential 
such that the corresponding one-parameter family of Gibbs measures $(\mu_\beta)_{\beta>0}$ accumulates at the same time on the two measures as $
\beta\to+\infty$. 

In dimension higher than one, we previously said that this non-convergence phenomenon can arise for finite-range potentials. 
The underlying phenomenon which we exploit is that one can imbed (in a way precised below) any (effective) 1D subshift into a higher-dimensional
subshift of finite type. In \cite{Hoc09}, there is a construction which allows to imbed a 1D effective subshift into a 3D subshift of finite type, which is the one 
used in \cite{CH10}. In this paper we use another construction from \cite{DRS12} based on `hierarchical self-simulating tilings'. It permits to imbed any (effective) 1D subshift into a 2D subshift of 
finite type. This is a rather cumbersome construction (that we will partly describe it below), although the underlying ideas are 
simple. (Let us mention that a different embedding construction is given in \cite{AS13}.) Moreover, we use the construction of 
certain 1D subshifts given in \cite{CR15}. It is somewhat more flexible than the one used in \cite{CH10}. Once we have a 2D subshift of finite type built up from a certain 1D subshift, 
we can then define a finite-range potential which `penalizes' the forbidden patterns (which are finitely many).

%%%%%% SUBSEC
\subsection{Organization of the paper}
In Section \ref{Preliminaries} we set the necessary definitions and notations for equilibrium states, subshifts and Wang tilings.
In Section \ref{Imbedding-Proposition} we state the embedding theorem of Durand et al. \cite{DRS12} and establish a key proposition which results from their
construction. In particular, we explain some of the ideas of the proof of the embedding theorem.
Next, we construct in Section \ref{Construction-of-a-base-system} a certain 1D effective subshift that serves as a `base' for a 2D subshift of finite type,
and we define an associated finite-range interaction. Section \ref{Estimates-on-admissible-patterns} contains some estimates involving the admissible 
patterns of the 2D subshift of finite type.
Finally, we prove in Section \ref{Non-convergence} our main result (Theorem \ref{maintheorem}), namely that that for every one-parameter family
$(\mu_\beta)_{\beta>0}$ in which $\mu_\beta$ is an equilibrium state at inverse temperature $\beta$ for the above interaction, $\lim_{\beta\to\infty}\mu_\beta$ does 
not exist.

%%%%%%%%%%%%%%%%%%%%%%%%%%%%%%%%%%%%%%%%%%%%%%%%%%%%%%%%
%%%%%%%%%%%%%%%%%%%%%%%%%%%%%%%%%%%%%%%%%%%%%%%%%%%%%%%%
\section{Equilibrium states, subshifts and tilings}
\label{Preliminaries}

The configuration space is $\mathcal{S}^{\Z^d}$, where $\mathcal{S}$ is a finite set and $d\geq 1$ is an integer. 
Regarding equilibrium states, we are interested in $d=2$, but we will also consider the case $d=1$ to
construct some subshifts needed in the proof of the main result.
On $\mathcal{S}^{\Z^d}$, we have the shift operator $\sigma$ defined by
\[
\sigma^{\boldi}(x)_{\boldj}=x_{\boldi+\boldj}
\]
where $x=(x_{\boldk})_{\boldk\in\Z^d}\in\mathcal{S}^{\Z^d}$, $\boldi,\boldj\in \Z^d$. In the language of symbolic dynamics \cite{LS02},
$(\mathcal{S}^{\Z^d},\sigma)$ is the $d$-dimensional full shift over $\mathcal{S}$. As usual, $\mathcal{S}^{\Z^d}$ is given the product topology, 
which is generated by the cylinder sets, and thus it is a compact metrizable space.
We denote by $\mathfrak{B}$ the Borel $\sigma$-algebra which coincides with the $\sigma$-algebra generated by cylinder sets.

%%%%%% SUBSEC
\subsection{Equilibrium states}
We only recall a few definitions and facts, mainly to set notations. We refer to \cite{Georgii,Ruelle2004} for details, as well as to \cite{Kel98} for
a viewpoint from ergodic theory for $\Z^d$-actions.
The basic ingredient is a shift-invariant summable
interaction $\Phi=(\Phi_\Lambda)_{\Lambda \Subset \Z^d}$. ($\Lambda \Subset \Z^d$ means that $\Lambda$ is a nonempty finite subset of $\Z^d$.)
More precisely, for each $\Lambda\Subset \Z^d$, $\Phi_\Lambda:\mathcal{S}^{\Z^d}\to\R$ is $\mathfrak{B}_\Lambda$-measurable,\footnote{\,$\mathfrak{B}_\Lambda$ is the $\sigma$-algebra generated by the coordinate maps $\omega\mapsto \omega_x$ when $x$ is restricted to $\Lambda$.}
$\Phi_\Lambda(x)=\Phi_\Lambda(\tilde{x})$ whenever $x$ and $\tilde{x}$ coincide on $\Lambda$, $\Phi_{\Lambda+\boldi}
=\Phi_\Lambda\circ \sigma^{\boldi}$ for all $\boldi\in\Z^d$, and $\sum_{\Lambda\ni\, 0} \|\Phi_\Lambda\|_\infty<\infty$. 
We say that $\Phi$ is of finite range if there exists $R>0$ such that $\Phi_\Lambda\equiv 0$ whenever $\text{diam}(\Lambda)>R$.
Given $\Phi$, define the function $\phi:\mathcal{S}^{\Z^d}\to\R$ by
\begin{equation}\label{def-phi-from-Phi}
\phi(x)=\sum_{\substack{\Lambda\ni\, 0 \\ \Lambda\Subset\Z^d}} \frac{\Phi_\Lambda(x)}{|\Lambda|}\,.
\end{equation}
By definition, the equilibrium states of $\beta\Phi$ are the shift-invariant probability measures which maximize the 
quantity
\[
\int \beta \phi\dd\nu + h(\nu)
\]
over all shift-invariant probability measures $\nu$ on $\mathcal{S}^{\Z^d}$. Here $h(\nu)$ is the entropy of $\nu$ (also called the mean entropy per 
site in statistical physics), and the supremum (which is attained) is called the (topological) pressure and is
denoted by $P(\beta\phi)=P(\sigma,\beta\phi)$. 
For the class of interactions we consider, shift-invariant Gibbs measures coincide with equilibrium states (see \cite[Chapter 15]{Georgii} or \
\cite[Theorem 4.2]{Ruelle2004}).
We use the terminology and convention of dynamical systems and thermodynamic formalism where one  ``prefers'' to maximize,
whereas in statistical physics one ``prefers'' to minimize.
Note that if $\Phi$ is of finite range then $\phi$ is locally constant, which means that the values of $\phi(x)$ are determined by
finitely many coordinates of $x$.

%%%%%% SUBSEC
\subsection{Subshifts, 2D subshifts of finite type and Wang tilings}

We refer to \cite{LS02} for more details.

We now turn to symbolic dynamics.
For $\Lambda\Subset\Z^d$ and $x\in \mathcal{S}^{\Z^d}$, $x_{\Lambda}$ is
the restriction of the configuration $x$ to $\Lambda$. An element $\omega\in \mathcal{S}^\Lambda$ is called a $\Lambda$-pattern, or simply a 
pattern, and $\Lambda$ is its support. But only the ``shape'' of $\Lambda$ matters, not its ``location'' in $\Z^d$.
More precisely, define the equivalence relation $\sim$ on the set of finite subsets of $\Z^d$ by setting
$\Lambda\sim \Lambda'$ if and only if there exists $\boldi\in\Z^d$ such that $\Lambda'=\Lambda+\boldi$. Hence, a support of a pattern
is an equivalence class for $\sim$. Let us also denote by $\sim$ the equivalence relation saying that two patterns
$\omega\in \mathcal{S}^\Lambda$ and $\omega'\in \mathcal{S}^{\Lambda'}$ are congruent if there exists $\boldj\in\Z^d$ such
that $\Lambda'=\Lambda+\boldj$ and $\omega'_{\boldi+\boldj}=\omega_{\boldi}$ for all $\boldi\in \Lambda$.
In the sequel, for the sake of simplicity, we will several times consider a ``localized'' $\Lambda$, and by $\omega\in\mathcal{S}^\Lambda$ we 
will mean any pattern $\omega'\sim\omega$.

For $n\geq1$ let
\[
\Lambda_n=\{-n+1, \ldots, 0,  \ldots, n-1\}^d
\]
which is the discrete $d$-dimensional cube with volume $\lambda_n=|\Lambda_n|=(2n-1)^d$.
When $d=1$, patterns of the form $\omega_0\cdots \omega_{n-1}$, where $\omega_i\in \mathcal{S}$ and $n\geq 0$, are called $n$-strings or simply strings.
Given a $\Lambda$-pattern $\omega$, let
\[
[\omega]=\big\{x\in \mathcal{S}^{\Z^d}: x_{\Lambda}=\omega\big\}
\]
denote the corresponding cylinder set. Given a finite set of patterns $P$, we write $[P]=\bigcup_{p\,\in P} [\,p\hspace{.5pt}]$.

A (nonempty) subset $X$ of $\mathcal{S}^{\Z^d}$ is a {\it subshift} if it is closed and $\sigma$-invariant.
Equivalently, $X\subseteq\mathcal{S}^{\Z^d}$ is a subshift if there exists a set $F$ of patterns such that
$X=X_F$ where
\[
X_F=\left\{x\in \mathcal{S}^{\Z^d}: \text{no pattern from}\ F\ \text{appears in}\ x\right\}.
\]

Thus $F$ is the set of ``forbidden'' patterns.
Note that $X_F$ may be empty and that different forbidden sets may generate the same subshift.
If $F$ is empty, $X_F=\mathcal{S}^{\Z^d}$.
A subshift $X$ is a \textit{of finite type} if there exists a finite set $F$ such that $X=X_F$.
We will use the abbreviation SFT for ``subshift of finite type''.
A subshift $X$ is \textit{effective} if there exists a recursively enumerable set $F$ such that $X=X_F$, that is, if we can have a 
Turing machine which, given no input, lists out the elements of $F$. Let us remark that the class of effective subshifts is countable, so we 
apparently rule out ``most'' subshifts, but all known examples are in this class, provided that they are defined using computable parameters, 
which is not a restriction in practice.

Given a subshift $X$ and an integer $n\geq1$, define the set of (locally) admissible $\Lambda_n$-patterns by
\[
\mathcal{P}_{X,n}=\left\{ \omega \in \mathcal{S}^{\Lambda_n}: \text{no forbidden pattern of}\  X\ \text{appears in}\ \omega\right\}.
\]
Finally, the set of probability measures on $\mathcal{S}^{\Z^d}$ is given the weak topology. A sequence of probability measures
$(\mu_k)_{k\geq 1}$ converges to a probability measure $\mu$ if, for any cylinder set $B$, $\mu_k(B)\to\mu(B)$, as $k\to+\infty$.

Let us briefly explain how a two-dimensional SFT can be seen as a Wang tiling, and vice versa. 
Working with tilings is better adapted for some constructions we use later on.
We consider tiles which are unit squares with colored sides. The colors are taken from a finite set
$\mathscr{C}$. For visualization purposes, one can actually use colors, but it can be more convenient to use symbols or integers. Hence, the
word ``color'' means any element from a finite set of symbols $\mathscr{C}$.
Hence a tile is a quadruple of colors (left, right, top and bottom ones), i.e., an element
of $\mathscr{C}^4$. A tile set is a subset $\tau\subset \mathscr{C}^4$. A Wang tiling with tiles from
$\tau$ is a mapping $x:\mathbb{Z}^2\to\tau$ which respects the color matching condition: abutting edges of adjacent tiles 
must have the same color. We shall simply say that it is a $\tau$-tiling.
See Fig. \ref{fig:example-Wang-tiling} for an example (where the colors are not only put on edges to ease visualization).
We can naturally identify each such tiling with a point $x=(x_{i,j})_{(i,j)\in\Z^2}\in \tau^{\Z^2}$, interpreting $\tau$ as an alphabet. 
The set $W\subset \tau^{\Z^2}$ of all $\tau$-tilings is obviously a subshift of
finite type (called the Wang shift of $\tau$). Conversely, a SFT can be regarded as a Wang shift.
In this paper, a tiling will mean a Wang tiling.

\begin{figure}[htbp]
\begin{center}
\includegraphics[width=250pt]{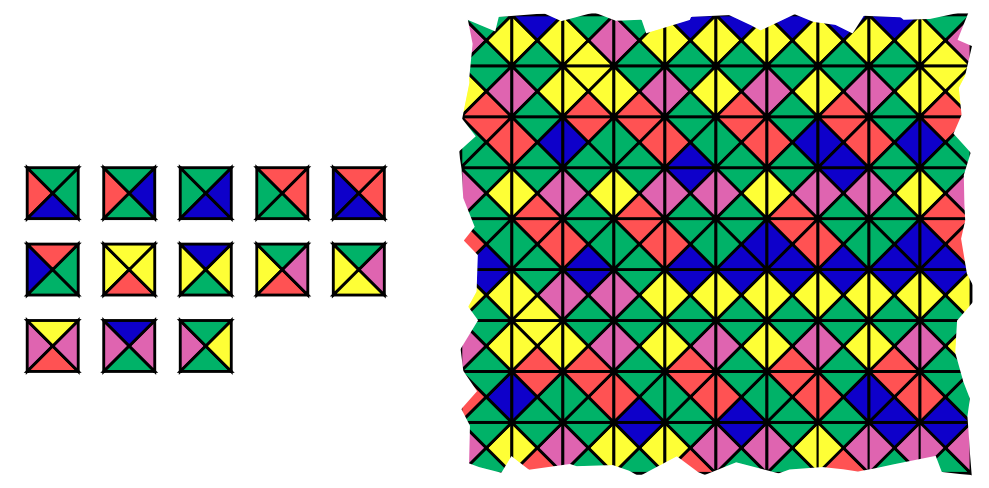}
\caption{An example of tile set $\tau$ and a region of $\Z^2$ legally tiled using this tile set.} \label{Wang tile}
\label{fig:example-Wang-tiling}
\end{center}
\end{figure}

%%%%%%%%%%%%%%%%%%%%%%%%%%%%%%%%%%%%%%%%%%%%%%%%%%%%%%%%
%%%%%%%%%%%%%%%%%%%%%%%%%%%%%%%%%%%%%%%%%%%%%%%%%%%%%%%%
\section{Imbedding a 1D effective subshift into a 2D subshift of finite type} \label{Imbedding-Proposition}

We are going to outline how to imbed a one-dimensional effective subshift into a two-dimensional SFT, as explained in detail in \cite{DRS12}. 
We define the ``vertical extension'' of a subshift $X\subset \mathcal{A}^\Z$ as
\[
\widehat{X}=\big\{ \hat{x}=(x_{i,j})_{(i,j)\in\Z^2}\in  \mathcal{A}^{\Z^2} : \forall j, (x_{i,j})_{i\in\Z}\in X
,\; \forall i,j, x_{i,j}=x_{i,j+1}\big\}.
\]
The following theorem and proposition are key-results in our construction. 
\begin{theorem}\cite[Theorem 10]{DRS12}\label{imbedding}
\leavevmode\\
Let $X$ be a one-dimensional effective subshift over a finite alphabet set $\mathcal{A}$. 
Then there exist finite alphabets $\mathcal{C}$ and $\breve{B}\subset \mathcal{A}\times \mathcal{C}$,
and a two-dimensional SFT $Y\subset \breve{B}^{\Z^2}$ such that $\pi(Y)=\widehat{X}$ where $\pi$
is the projection from $\breve{B}$ to $\mathcal{A}$.\footnote{We define in the obvious way the projection acting on
patterns or configurations by applying $\pi$ to every symbol.}
\end{theorem}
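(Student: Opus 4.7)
The plan is to build $Y$ as a tiling whose tiles carry several coordinates simultaneously. One coordinate takes values in $\mathcal{A}$ and will be subject to the local rule $x_{i,j}=x_{i,j+1}$, so that this layer is determined by the single row $(x_{i,0})_{i\in\Z}$. The projection $\pi(Y)$ then equals $\widehat X$ as a set provided we can arrange that a consistent tiling exists if and only if this row lies in $X$. Since $X$ is defined by an infinite recursively enumerable set $F$ of forbidden strings, we cannot enforce $X$ directly by finitely many Wang matching conditions; instead, we encode into the remaining coordinates a computational device that progressively checks the row against $F$ and blocks the tiling whenever a forbidden substring is detected.

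The device is provided by a self-similar, ``fixed-point'' aperiodic SFT $T_0\subset \mathcal{C}_0^{\Z^2}$: a tile set whose valid tilings are hierarchically organized into macro-tiles of sizes $N, N^2, N^3,\ldots$, where each level-$k$ macro-tile is an $N\times N$ array of level-$(k-1)$ macro-tiles, glued so that certain reserved interior regions, playing the roles of ``computation zones'' and ``information channels'', are consistent across scales. By the self-simulation property, one can arrange for a level-$k$ macro-tile to contain a computation zone whose size grows quickly in $k$, together with channels that transport a description of the tile set itself and the running state of a universal Turing machine up and down the hierarchy. This is the technical heart of the construction in \cite{DRS12}, and I would import it rather than rebuild it.

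Next I would program the embedded machine so that, at level $k$, it reads the $\mathcal{A}$-symbols exposed along the southern boundary of the enclosing macro-tile, simultaneously enumerates $F$ using the effectiveness of $X$, and verifies that none of the currently enumerated forbidden patterns fits inside the visible portion of the row. Local matching rules glue the computation zones and channels at adjacent levels, and aperiodicity of $T_0$ guarantees that every cell of $\Z^2$ lies inside macro-tiles of arbitrarily large level; hence every finite substring of the common row is eventually inspected at some level. If the row belongs to $X$, all checks succeed and a $Y$-tiling exists by vertical repetition; conversely, a forbidden pattern in the row would be detected at some finite level and no consistent $Y$-tiling could be completed. Setting $\mathcal{C}$ to be the product of $\mathcal{C}_0$ with the finite alphabet needed to mark tape cells, head positions and states, and taking $\breve{B}\subset \mathcal{A}\times \mathcal{C}$ to be the admissible combinations, defines $Y$ as a 2D SFT with $\pi(Y)=\widehat X$.

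The principal obstacle is the construction and verification of the self-similar tile set $T_0$ with its computation zones and self-describing channels, and the careful bookkeeping that ensures the Turing simulation at level $k$ has access to both enough space and enough of the row to detect any given forbidden pattern in bounded time; this is precisely the hard combinatorial/computational work of \cite{DRS12} that I would invoke as a black box. The remaining verifications, namely that $\pi(Y)\subset \widehat X$ and that every element of $\widehat X$ lifts to an element of $Y$, then reduce to a straightforward traversal of the hierarchical structure of $T_0$.
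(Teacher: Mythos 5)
Your sketch follows the same route the paper itself takes: the paper does not actually prove this theorem but cites it as \cite[Theorem 10]{DRS12} and only recalls the main ideas (fixed-point self-simulating tile sets, computation zones running a universal Turing machine, input layer vertically frozen) in order to set up Proposition~\ref{computation layer}, so both you and the paper are deferring the hard combinatorics to \cite{DRS12}. One imprecision in your description is worth flagging because it matters for the rest of the paper: a level-$k$ macro-tile does \emph{not} ``read the $\mathcal{A}$-symbols exposed along its entire southern boundary.'' The macro-tile covers $\ell_k = N_0 N_1 \cdots N_{k-1}$ columns, but its computation zone has only roughly $N_k$ virtual cells, which is far too small to see the whole length-$\ell_k$ substring; in the actual construction each level-$k$ macro-tile is delegated only a very short fragment of the row, and the full check is spread both across levels and across the many macro-tiles sitting side by side at each level (the $i$th bit goes to the $i$th sub-macro-tile, etc.). This ``very short delegated substring'' feature is exactly what is used in the paper's proof of Proposition~\ref{computation layer}, so a literal reading of your sentence would describe a construction that cannot fit inside the available computation zone and would also break the count-independence argument later on.
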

In plain words, every sequence from a 1D effective subshift can be obtained as a projection of a configuration of some 2D subshift of 
finite type in an extended alphabet.

Configurations in $Y$ can be seen as configurations from $\widehat{X}$ marked with some extra symbols taken from $\mathcal{C}$.
These symbols form a new ``layer" which is superimposed on top of $\widehat{X}$.
For reasons explained below, the layer where configurations of $\widehat{X}$ appear is called the {\it input layer}, whereas the layer where patterns over
$\mathcal{C}$ appear is called the {\it computation layer}. Then $\pi$ erases the superimposed layer of data corresponding to $\mathcal{C}$.

The following proposition will play an important role in some computations later on.
\begin{proposition} \label{computation layer}
\leavevmode\\
For each $n\geq 1$ there exists $c_n\leq (2n-1)^2$ such that
\[
|\pi^{-1}\, \widehat{\omega} \cap \mathcal{P}_{Y, n}| =c_n
\]
for all $\omega \in \mathcal{P}_{X,n}$, where $\widehat{\omega}\in\mathcal{P}_{\widehat{X},n}$.
\end{proposition}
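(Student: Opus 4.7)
The plan is to exhibit a rigid parameterization of computation-layer extensions that turns out to be essentially independent of $\omega$. Informally, the DRS construction decorates the input layer with additional tracks (coordinate/macrotile-position tracks, Turing-machine simulation tracks, communication bits) that are locally rigid: once one chooses an ``absolute position'' for any single cell in the macrotile hierarchy, the local Wang rules propagate this choice to every other cell of the pattern, and this coordinate data together with the input-layer content determines all remaining tracks. I will exploit this rigidity to count $c_n$.

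Concretely, the argument proceeds in three steps. First, I would parameterize each element of $\pi^{-1}\widehat{\omega}\cap\mathcal{P}_{Y,n}$ by the coordinate value assigned to the corner cell at $(-n+1,-n+1)$: rigidity propagates this choice uniquely to the rest of $\Lambda_n$, and the determinism of the self-simulation then fixes all remaining tracks from this data plus the input $\widehat{\omega}$. Second, I would bound the number of distinct choices: two coordinate assignments to the corner cell that differ by a translation of magnitude at least $2n-1$ in either axis produce the same visible pattern on $\Lambda_n$, so distinct extensions correspond to at most $(2n-1)^2$ coordinate residues, giving $c_n\leq (2n-1)^2$. Third, I would establish independence of $\omega$: the Wang rules among the coordinate and macrotile tracks do not reference the input-layer content, so the set of admissible coordinate choices is the same for every $\omega$; and the rules that do couple the two layers (the verifier's transitions, checking that the input is locally $X$-admissible) are satisfied uniformly over all $\omega\in\mathcal{P}_{X,n}$.

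The main obstacle is the third step: showing that the verification rules of $Y$ never lead to an $\omega$-dependent count. What must be extracted is that the only local violations the verifier can raise inside $\Lambda_n$ are forbidden patterns of $X$ actually appearing inside $\omega$; since $\omega\in\mathcal{P}_{X,n}$, such violations are absent by assumption, and hence each valid coordinate choice extends in exactly one way to a locally admissible computation-layer pattern, independently of the particular $\omega$ chosen. Establishing this uniformity is the technical core of the proposition and would require tracing through the DRS construction carefully to confirm that its verifier's local rules are indeed ``symbolic'' in the relevant sense---never producing obstructions specific to $\omega$ beyond those already ruled out by $X$-admissibility.
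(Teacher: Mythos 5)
Your proposal correctly identifies the key insight (determinism of the Turing-machine simulation, plus the fact that the verifier only flags $X$-admissibility violations, which are absent by hypothesis), but the route is genuinely different from the paper's, and your version has two gaps that the paper's route avoids.

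The paper never attempts to parameterize and count the computation-layer extensions directly. Instead it constructs an injection from $\pi^{-1}\widehat{\omega}\cap\mathcal{P}_{Y,n}$ into $\pi^{-1}\widehat{\omega}'\cap\mathcal{P}_{Y,n}$ for any two admissible $\omega,\omega'$: take any extension $p$, replace its input layer by $\widehat{\omega}'$, and re-run the deterministic simulation in each visible macro-tile on the new (locally admissible) substring; the position data in the computation layer is unchanged, and $\omega$ can be read back off, so the map is injective. Symmetry in $\omega,\omega'$ then gives equality of cardinalities \emph{without ever computing $c_n$}, and the bound $c_n\le(2n-1)^2$ comes separately from counting possible placements of the largest macro-tile fitting in the $(2n-1)\times(2n-1)$ square. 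This bypasses exactly the two places where your argument thins out.

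First, your step 2 is not correct as stated: the claim that two coordinate assignments to the corner cell differing by a translation of magnitude $\ge 2n-1$ produce the same visible pattern on $\Lambda_n$ has no justification and is in fact false — the hierarchical tiling is aperiodic, and the macro-tile boundaries at higher levels do not repeat with period $2n-1$, so shifting the coordinate data by $2n-1$ generically changes the visible pattern. The paper derives the bound from the number of placements of the largest fully-visible macro-tile; this is a different and correct count. Second, your step 3 — showing that the set of admissible coordinate choices is the same for every $\omega$ — is flagged by you as the technical core, but left open. That is precisely what needs to be proved, and a direct parameterization forces you to characterize the admissible coordinate set explicitly. The paper's input-swap bijection avoids this: it only needs that the simulation is deterministic (so re-running produces a unique pattern) and that $\omega'$-admissibility prevents the simulation from halting — both of which are built into the construction and are much easier to invoke than a full description of valid coordinate data.
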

This proposition  says that the number of admissible patterns in the computation layer does not depend on the input layer.
To prove it, we need some elements of the proof of Theorem \ref{imbedding} found in \cite{DRS12}.
The basic idea of the proof is to run a Turing machine $\EuScript{M}_X$ which checks the forbidden strings of $X$.
The transition rules of $\EuScript{M}_X$ are converted into tiling constraints, since they are described locally. 
Then a `space-time diagram' of $\EuScript{M}_X$ is (almost) a tiling based on the corresponding tile set. We can consider the horizontal
dimension as `space', given by the symbols on the tape of the Turing machine, whereas the vertical dimension is `time' which is
given by successive computations of the Turing machine.

%A pattern in the computation layer of an element in $Y$ contains a space-time diagram. 

There are two difficulties to check forbidden strings of $X$. First, we have to check arbitrary long input strings, since the length of forbidden strings may not be 
bounded. Second, we need to start the Turing machine at every site, since we should check every string starting at every site.
In order to overcome the first difficulty, we will consider a tile set which admits a hierarchical structure, the {\it self-simulating structure} defined in 
Subsection \ref{Self-simulating structure}.
The way to solve the second difficulty is explained in Subsection \ref{Simulation to check forbidden strings}.

Colloquially, the idea to organize the computations uses fixed-point self-similar tilings. The idea of a self-similar fixed-point tile set can be sketched as follows. 
It is well known that tilings can be used to simulate computations, in the sense that for any Turing machine one can construct a tile set simulating it:
use rows of tiles to simulate the tape in the machine, with successive rows corresponding to consecutive states of the machine.
In turn, these computations can be used to guarantee the desired behavior of bigger blocks, called 
macro-tiles. So, for a desired behavior of macro-tiles, we can construct a  tile set which guarantees this behavior. If these tiling rules coincide 
with the rules for macro-tiles, we get self-similarity as a consequence. The way to achieve this is to use an idea very close to the classical Kleene fixed-point theorem
in computability theory (and which was for instance used to construct self-reproducing automata by von Neumann).

%%%%%% SUBSEC
\subsection{Self-simulating structure} \label{Self-simulating structure}
In order to get the self-simulating structure, we consider tilings with ``macro tiles".

\begin{definition}[Macro tiles]
Consider a tile set $\tau$ and an integer $N\geq1$.
A pattern $\omega$ over $\tau^{\{0, 1, \ldots, N-1\}^2}$ is called a {\it $\tau$-macro tile with zoom factor $N$},
if all tiles in $\omega$ satisfy the color matching property.
Denote by $\tau^{(N)}$ the set of all $\tau$-macro tiles with zoom factor $N$.
Every side of a macro-tile consists of a sequence of $N$ colors and we call it a macro-color. 
\end{definition} 
It is easy to see that a $\tau$-tiling can be seen as a $\tau^{(N)}$-tiling. 
In particular, we pay attention to the situation when a $\tau$-tiling can be split uniquely into macro-tiles which acts like tiles from another tile set $\rho$.

\begin{definition}[Simulation]
Let $\rho$ and $\tau$ be tile sets and $N\geq1$.
The tile set $\rho$ is {\it simulated} by a tile set $\tau$ with zoom factor $N$ if 
there exists an injective map $r: \rho\rightarrow \tau^{(N)}$ such that
\begin{itemize}
\item $t,s \in \rho$ satisfy the color matching property if and only if $r(t), r(s)$ satisfy the color matching property;
\item for every $\tau$-tiling there exists a unique vertical and horizontal $N\times N$ split such that every pattern in the $N\times N$ square is the image of an element in $\rho$ by $r$.
\end{itemize}

\end{definition}

\begin{example}[Coordinate tile]\cite{DRS12}\label{coordinate tile}
Consider a tile set $\rho$ whose element is colored by only one color, namely ``0'', and consider $\rho=\{0\}^4$.
We define a tile set which simulates $\rho$.
Let $N\geq2$ and $\mathscr{C}=(\Z/N\Z)^2$.
Define a tile set $\tau$ by
\[
\tau=\left\{t \in \mathscr{C}^4 : t_b=t_\ell=(i,j), t_r=(i+1, j), t_t=(i, j+1)\ (i,j)\in \mathscr{C}\right\}.
\]
Define a map $r: \rho \rightarrow \tau$ by 
$r((0,0,0,0))=$ the $\tau$-macro tile with zoom factor $N$ whose macro colors are
$(0,0) (1,0) \cdots (N-1,0)$ for the bottom and top,
$(0,0) (0,1) \cdots (0, N-1)$ for the left and right; see Fig. \ref{simulation macro tile}.
%We call the tile set $\tau$ the coordinate tile set with size $N$
We call the tile $r((0,0,0,0))$ the {\it coordinate tile} with size $N$.
\end{example}

\begin{figure}[htbp]
\begin{center}
\includegraphics[width=300pt]{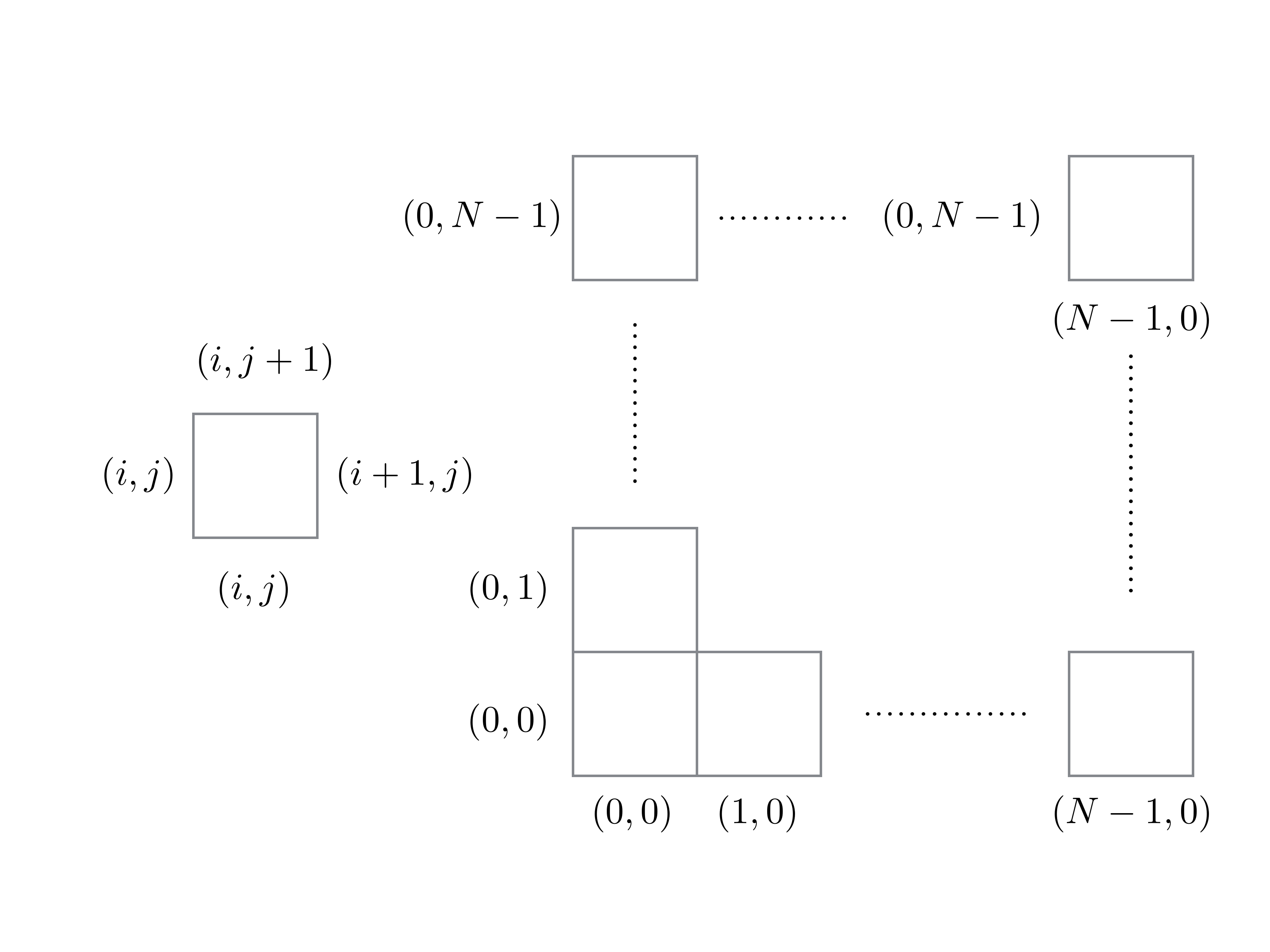}
\caption{A tile in $\tau$ (left) and a $\tau$ macro tile with zoom factor $N$ (right).} \label{simulation macro tile}
\end{center}
\end{figure} 

We will consider a sequence $\{\tau_k\}_{k\geq0}$ of tile sets with the following properties:
the level $k+1$ tile set $\tau_{k+1}$ is simulated by the level $k$ tile set $\tau_{k}$ with zoom factor $N_{k+1}$;
for every $k$ the tile set $\tau_k$ describes simulation of $\EuScript{M}_X$;
the zoom factors $N_k$ increase and macro tiles of $\tau_k$ can treat long input strings as $k$ increases. 
See Fig. \ref{sequence of macro tiles}.
 
We start with the construction of a tile set which simulates a $\ell$-bits colored tile set, $\rho \subset (\{0,1\}^{\ell})^4$. 
Consider a Turing machine $\EuScript{M}_\rho$ which checks whether a given four $\ell$-bits input represents a tile in $\rho$.
%For each tile $t\in \rho$ we make a corresponding macro tile where $\EuScript{M}$ simulates its calculation when the input is $k$-bits
%colors of $t$.
%We consider a tile set whose macro tiles can describe simulation of $\EuScript{M}_\rho$.
Superimposing other tiles on coordinate tiles (Example \ref{coordinate tile}), 
we define a tile set which simulates $\rho$.
The superimposed tiles make another ``layer" on the coordinate tiles. 
Let $c=(c_b, c_l, c_t, c_r)$ be a four $\ell$-bits input. 
Consider a macro tile with size $N$ consisting of  coordinate tiles with size $N$.
Tiles whose color contain $0$ are called boundary tiles. 
(In Fig.\ \ref{structure of macro tile} the grey  and green zones are the boundary tiles.)
On the middle $\ell$ tiles of the bottom (respectively left, top and right) side of the boundary, we distribute a $\ell$-bits color $c_b$ (resp.\ $c_l, c_t$ and $c_r$).
For the rest of the boundary tiles we distribute $0$.
Since we use the coordinate tiles, each tile ``knows" its coordinate and we can distribute colors like this.

In the middle square of the macro tile (the red and yellow zones in Fig.\ \ref{structure of macro tile})
we put tiles which describes a universal Turing machine with a program of $\EuScript{M}_\rho$.
%We use a universal Turing machine for later use. 
Since the rule of a universal Turing machine is given by local rules, 
they can be embedded into tiles. 
Conveying the $\ell$-bits colors on the boundary to the middle square by wires, 
we let the universal Turing machine to know the input. 
Then a space-time diagram with the input $c=(c_b, c_l, c_t, c_r)$ appears in the square.
Since each tile ``knows" its coordinate, this structure is arranged easily. 
% can make such wires because each tile in the coordinate macro tile knows its position. 
The size $N$ of macro tiles is chosen to be  large enough to contain these structures and to finish the simulation on the computation zone. 
If $c\in \rho$, the simulation doesn't halt and we have a macro tile.
If $c\notin \rho$, the simulation halts and there is no macro tile with this structure.
Since a universal Turing machine is deterministic, there is a one-to-one correspondence between an input and the pattern of the simulation.
Hence the tile set, say $\eta$, which makes this macro tile simulates $\rho$.

Note that the number of tiles in $\eta$ does not depend on $\EuScript{M}_\rho$
since we use a universal Turing machine.
Moreover, tiles for boundaries, wires and the space-time diagram do not depend on $\EuScript{M}_\rho$
and only those for the program do. 
By using this fact, we modify the program on a macro tile of $\eta$ and get self-simulating structure. 
We replace the program of $\EuScript{M}_\rho$ with the following three programs:
the program to make the boundaries, wires and computation structures; 
 the program of $\EuScript{M}_X$;
and the program which rewrites itself.
Then $\eta$ simulates a tile set whose macro tiles have the same structure as in Figure 3 and carries the same program as in the macro tiles of $\eta$.

\begin{figure}[htbp]
\begin{center}
\includegraphics[width=300pt]{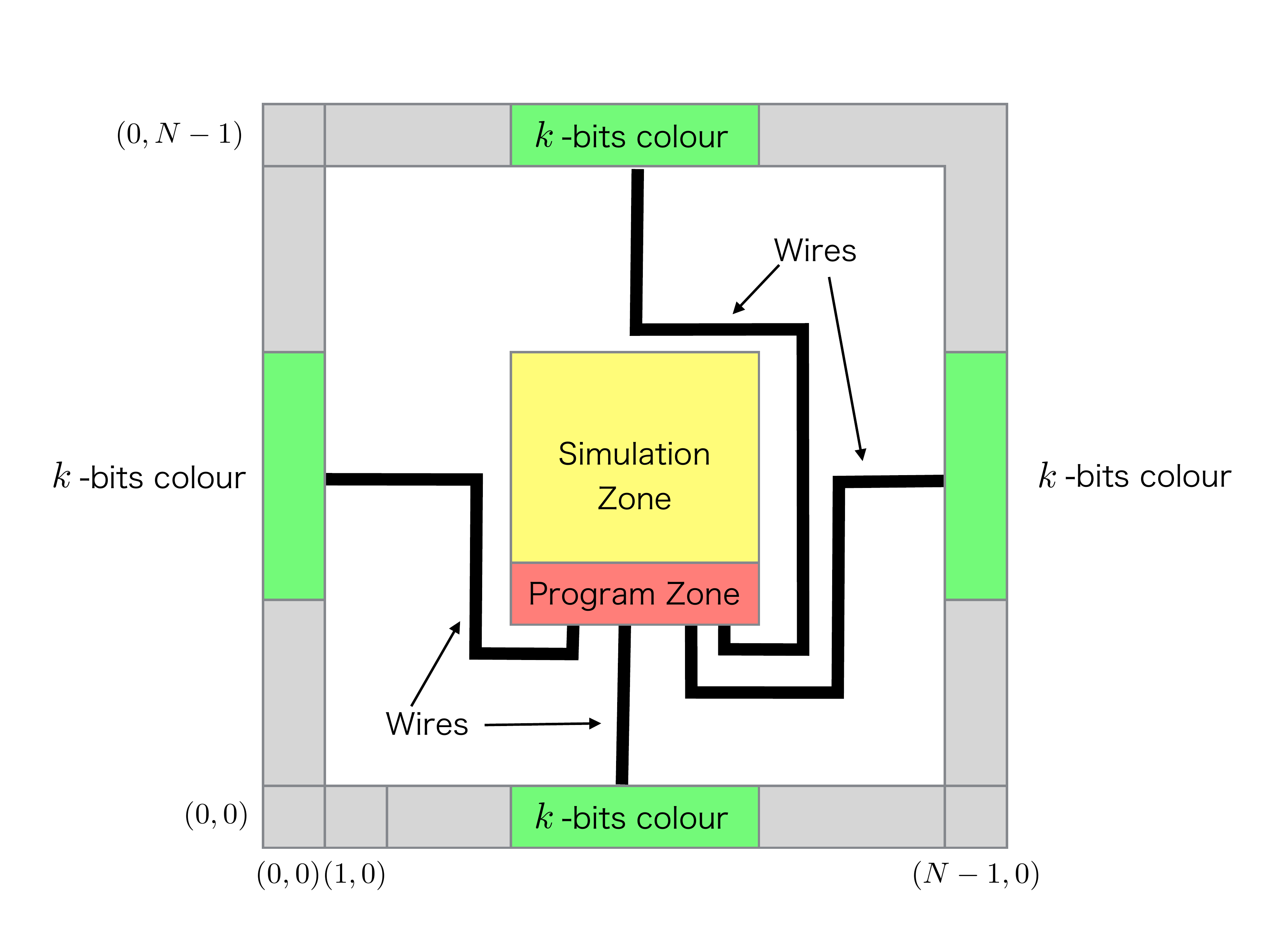}
\caption{A macro tile simulating a tile colored by $k$-bits.} \label{structure of macro tile}
\end{center}
\end{figure} 

Using this construction, we can make a sequence of tile sets which simulate tiles in the next level and carry the same program. 
The first level tile set $\tau_0$ simulates the second one $\tau_1$ with zoom factor $N_0$, then
$\tau_1$ simulates the third one $\tau_2$ with zoom factor $N_1$, and so on and so forth.
See Fig. \ref{sequence of macro tiles}.
Since $\tau_0$ simulates $\tau_1$ and $\tau_1$ simulates $\tau_2$, 
the patterns of $\tau_0$ in $N_0\times N_0$ squares must represent tiles in $\tau_2$.
If we zoom out, we can see $\tau_2$-tiling in a $\tau_0$-tiling and so on. 
%We can get such a sequence of tile sets by modifying the previous construction. 
%We can make every macro tile know its level and calculate the size of zoom factor for the next level. 
%( \blue{continue explanation...})

\begin{figure}[htbp]
\begin{center}
\includegraphics[width=300pt]{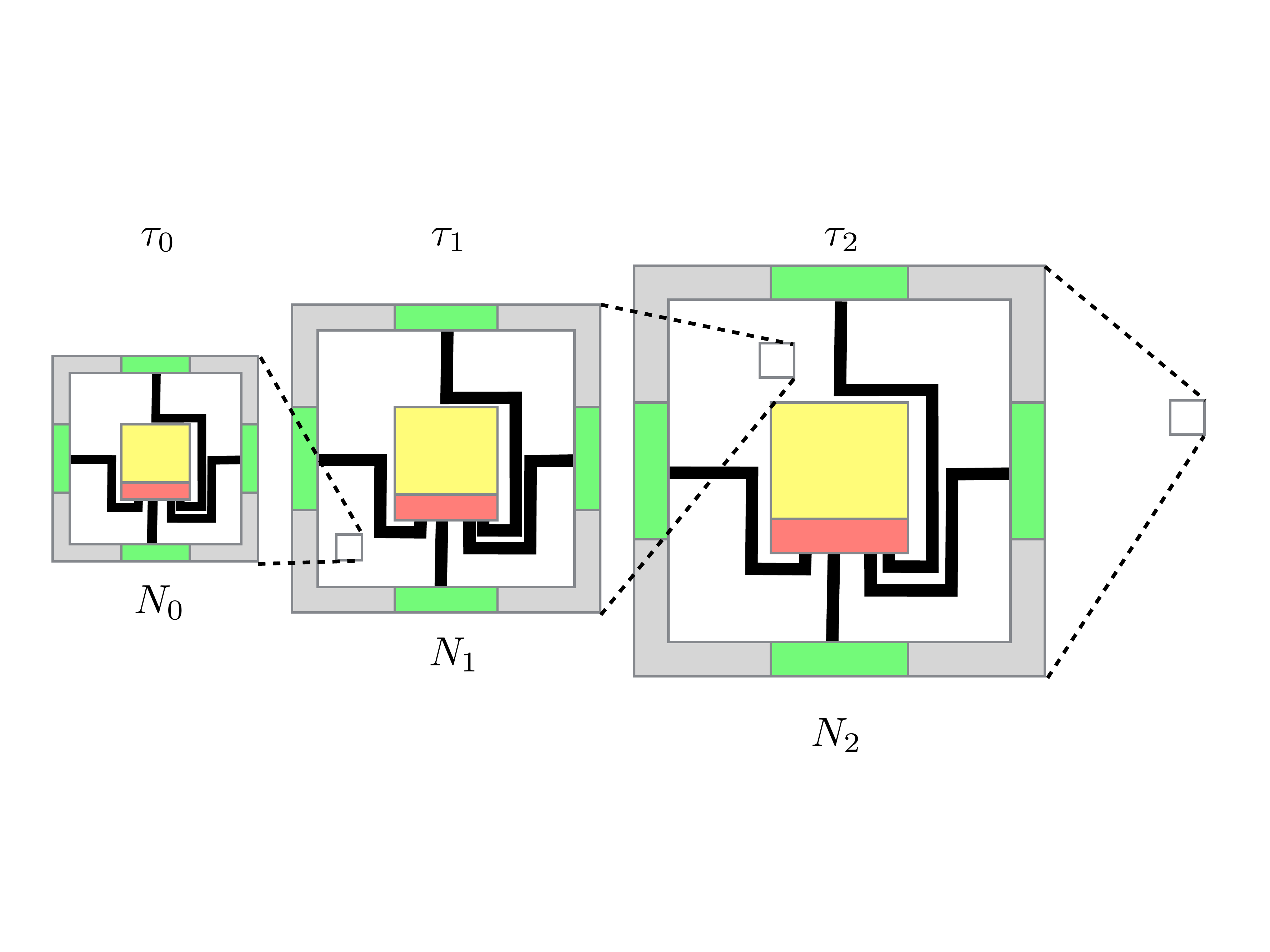}
\caption{Sequence of macro tiles which simulate the previous level tile sets.} \label{sequence of macro tiles}
\end{center}
\end{figure} 

%%%%%% SUBSEC
\subsection{Simulation to check forbidden strings} \label{Simulation to check forbidden strings}

According to the previous discussion we can construct  a sequence $\{\tau_k\}$ of tile sets which simulate the tiles in the next level.
Moreover, macro tiles in any level simulate the same program which includes $\EuScript{M}_X$.
We superimpose tiles which carry alphabet $\mathcal{A}$ on tiles in $\tau_0$.
The new layer is called the input layer and the other one is called the computation layer. 
Then $\EuScript{M}_X$ in the macro tiles of $\tau_0$ can access the input layer and it checks whether the input is forbidden or not. 
However it is not clear how the programs in the macro tiles of higher level tile sets know the input.
We distribute the infinite strings in the input layer in the following way.
We call by a {\it level $k$ macro tile} a macro tile consists of tiles in $\tau_k$.
Consider a $\tau_0$-tiling and zoom out to see $\tau_k$-tiling.
Let $\ell_k=N_0 N_1\cdots  N_{k-1}$.
Then a level $k$ macro tile is represented by $N_k$ tiles consisting of $\ell_k\times \ell_k$ tiles in $\tau_0$.
Each $\ell_k\times \ell_k$ tile represents a level $k-1$ macro tile.
We distribute $\ell_k$ bits in the input string to $N_k$ level $k-1$ macro tiles in the following way:
The $i$th bit from the left is distributed to the $i$th level $k-1$ macro tile from the bottom.
See Fig. \ref{distribution of an input}.
Not only one bit but also string can be distributed.
Moreover we can change the length of the distributed strings, while it should be very short with regard to the size of the macro tile to which it is 
distributed.
Hence the main program in each macro tile simulates the distributed substring of the input.
Note that the way of distributing does not depend on inputs.
By the above discussion, we can obtain the set of tilings whose input layers are the vertical extension of $X$.

\begin{figure}[htbp]
\begin{center}
\includegraphics[width=300pt]{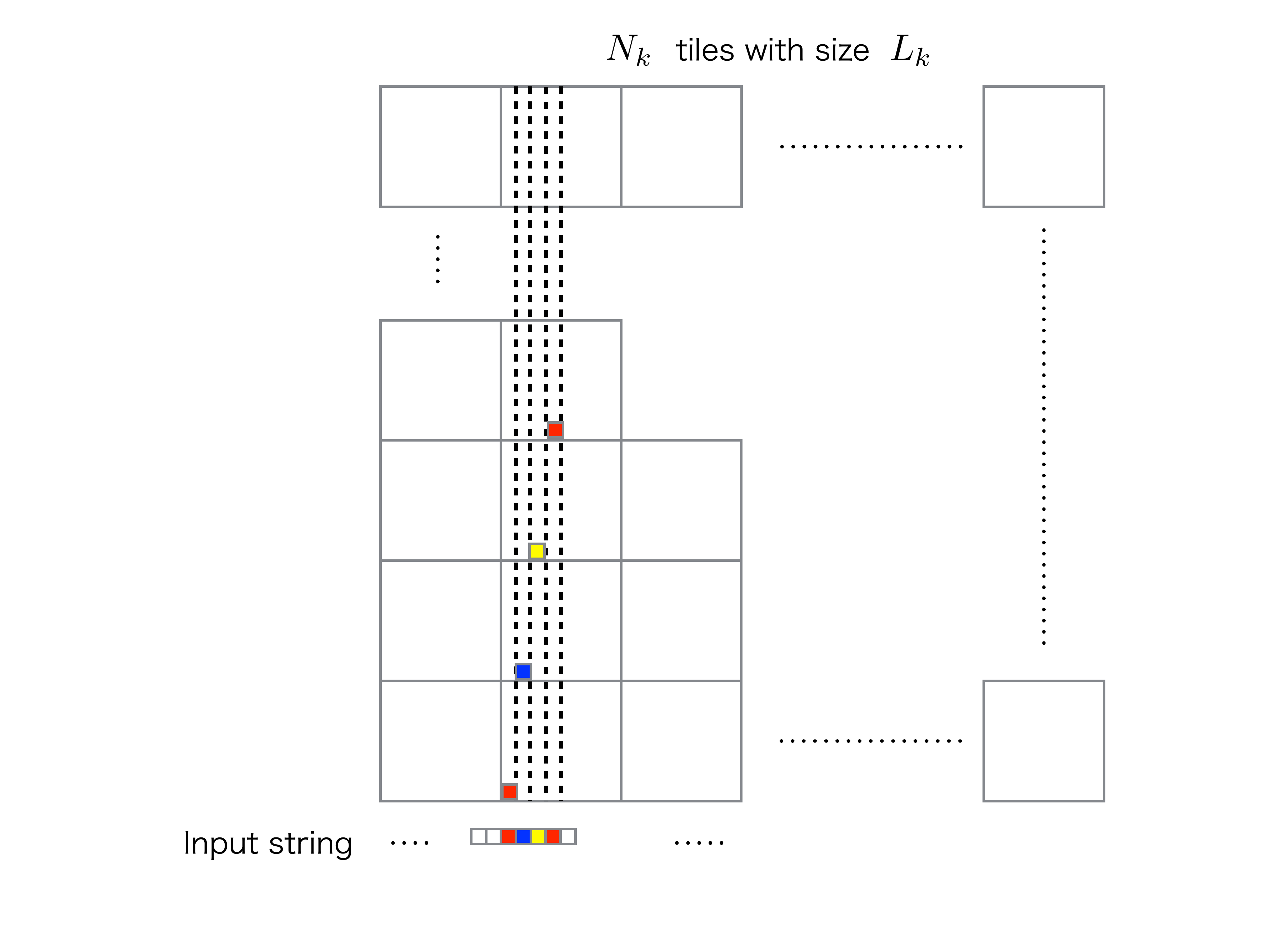}
\caption{Sequence of macro tiles which simulate the previous level tile sets.} \label{distribution of an input}
\end{center}
\end{figure} 

We now prove Proposition \ref{computation layer}.
\begin{proof}[Proof of Proposition \ref{computation layer}]
Fix $n\geq 1$ and $\omega\in \mathcal{P}_{X,n}$.
Take $p\in \pi^{-1}\, \widehat{\omega}\cap \mathcal{P}_{Y,n}$ and $\omega' \in \mathcal{P}_{X,n}$.
Consider a macro tile in the computation layer of $p$ whose size is the maximum size included in the $n\times n$ square.
The macro tile simulates a very short substring of $\omega$. 
Replace the input layer of $p$ by $\widehat{\omega}'$.
Then the macro tile simulate the substring of $\omega'$ in the same position and  with the same length as the simulated substring of $\omega$.
Since $\omega'$ is an admissible string of $X$, the simulation does not halt.
Hence we have a new macro tile.
This change of the macro tile influences macro tiles in higher levels to change the colors on the boundary and their simulations are also renewed.
Since there is a one-to-one correspondence between inputs and the patterns of simulation,
a map from the patterns in the computation layer of $\pi^{-1}\, \widehat{\omega}\,\cap\, \mathcal{P}_{Y,n}$ to $\pi^{-1}\, \widehat{\omega}'\cap \mathcal{P}_{Y,n}$ is 
injective.
Since $\omega$ is arbitrary, this proves that the number of patterns in $\pi^{-1}\, \widehat{\omega}\,\cap\mathcal{P}_{Y,n}$ does not depend on the input $\omega$.
The number of possible patterns in the computation layer depends on the positions of the macro tile with the maximum size included in a square of size 
$(2n-1)\times (2n-1)$. Hence $c_n$ is at most $(2n-1)^2$.

\begin{figure}[htbp]
\begin{center}
\includegraphics[width=300pt]{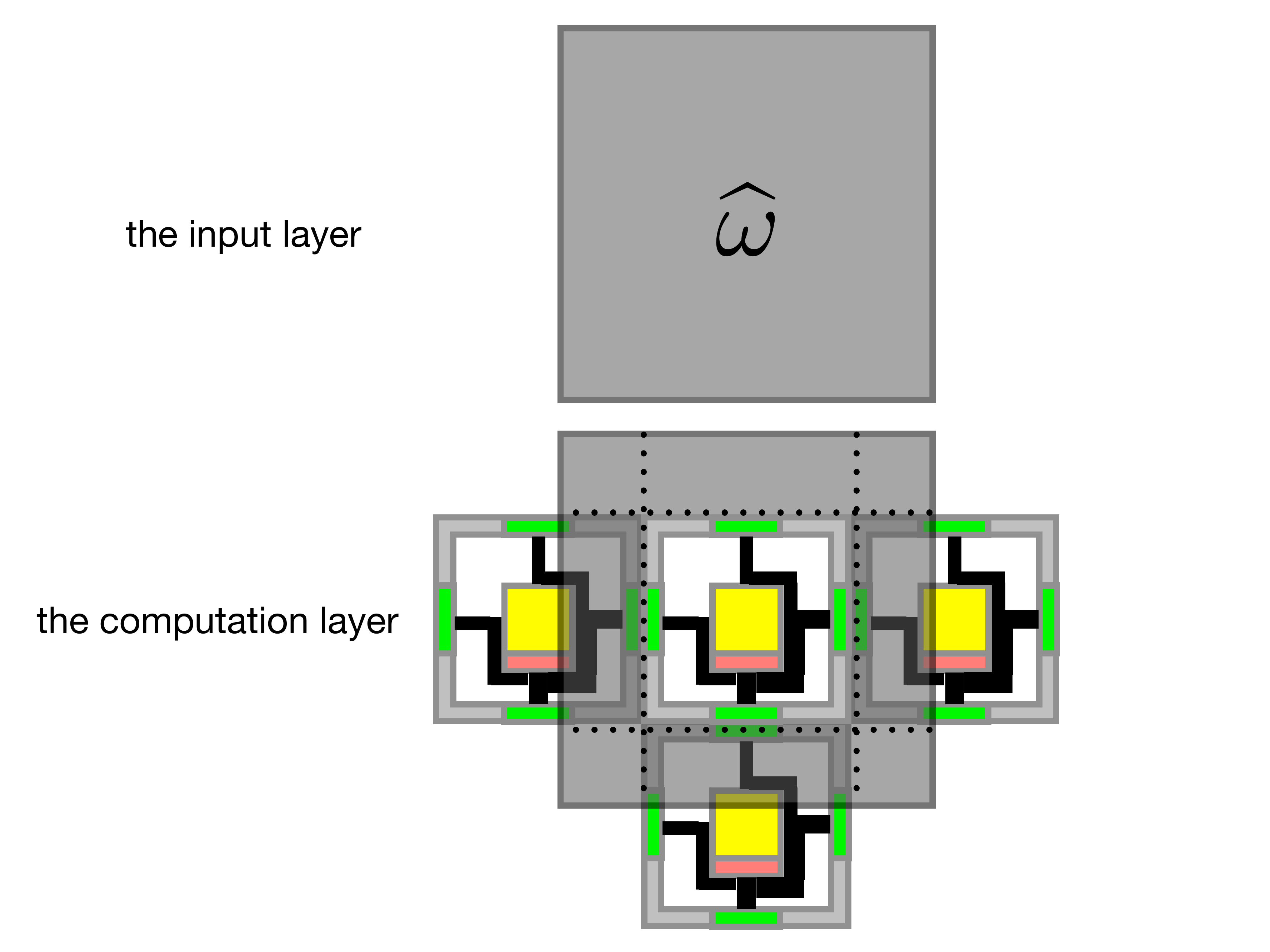}
\caption{An element $p\in \pi^{-1}\widehat{\omega}\cap \mathcal{P}_{n, Y}$.
} \label{globally admissible pattern}
\end{center}
\end{figure} 
\end{proof}

%%%%%%%%%%%%%%%%%%%%%%%%%%%%%%%%%%%%%%%%%%%%%%%%%%%%%%%%
%%%%%%%%%%%%%%%%%%%%%%%%%%%%%%%%%%%%%%%%%%%%%%%%%%%%%%%%
\section{Construction of a some 1D effective subshift} \label{Construction-of-a-base-system}

We construct a particular 1D subshift which we will imbed into a two-dimensional SFT by using Theorem \ref{imbedding}.
Then we define a finite-range interaction which somewhat penalizes the admissible patterns of this SFT. 

We consider the alphabet $\Sigma=\{-1, 0, +1\}$.
Let $\Sigma_+=\{0, +1\}$ and $\Sigma_-=\{0, -1\}$. 
We construct a sequence $\{X_{k}\}_{k\geq1 }$ of SFTs by giving the sequence $\{F_k\}$ of forbidden sets. 
 
We will consider an increasing sequence $\{\ell_k\}$ of lengths of forbidden strings
and a decreasing sequence $\{r_k\}$ of frequencies of $0$ in forbidden strings.
We will choose these sequences to satisfy $\lim_{k\to\infty}\ell_k=\infty$ and $\lim_{k\to\infty} r_k=0$.
We will give the precise conditions on $\{\ell_k\}$ and $\{r_k\}$ in Section \ref{Estimates-on-admissible-patterns}.
Here we explain the basic idea to make a sequence of SFTs by controlling the frequency of $0$ in forbidden sets.
For each $n$ set $\Sigma^n=\Sigma^{\{0,1,\ldots, n-1\}}$.
For $\omega\in \Sigma^n$ denote by $f_0$ the frequency with which $0$ appears in $\omega$, i.e., 
\[
f_0(\omega)=\frac{1}{n}\big|\big\{i\in \{0,1, \ldots, n-1\}: \omega_i=0\big\}\big|.
\]
Let 
\[
F_{1}=\Sigma^{2\ell_1-1}\,\setminus\, \Sigma_+^{2\ell_1-1}
\cup \big\{\omega\in \Sigma_+^{2\ell_1-1} : f_0(\omega) \geq r_1\big\}.
\]
The first set forbids the strings including $-1$, while the second set forbids the strings with many zeros.
Similarly define $F_2$ by
\[
F_{2}=\Sigma^{2\ell_2-1}\,\setminus\, \Sigma_-^{2\ell_2-1}
\cup\big\{\omega\in \Sigma_-^{2\ell_2-1} : f_0(\omega) \geq r_2\big\}.
\]
For $m\geq2$ define $F_{2m-1}$ by 
\begin{align*}
& F_{2m-1}
=\Sigma^{2\ell_{2m-1}-1}\,\setminus\, \Sigma_+^{2\ell_{2m-1}-1}
\cup \big\{\omega\in \Sigma_+^{2\ell_{2m-1}-1}: f_0(\omega) \geq r_{2m-1}\big\}\\
&\cup \left\{\omega\in \Sigma_+^{2\ell_{2m-1}-1}:  \text{there exists a subsequence}\ \eta\ \text{of}\ \omega\ \text{s.t.}\ \eta\in \bigcup_{i=1}^{m-1} F_{2i-1}\right\}
\end{align*}
and 
$F_{2m}$ by 
\begin{align*}
& F_{2m}=\Sigma^{2\ell_{2m}-1}\,\setminus\, \Sigma_-^{2\ell_{2m}-1}
\cup \big\{\omega\in \Sigma_-^{2\ell_{2m}-1}: f_0(\omega) \geq r_{2m}\big\}\\
&\cup \left\{\omega\in \Sigma_-^{2\ell_{2m}-1}: \ 
\text{there exists a subsequence}\ \eta\ \text{of}\ \omega\ \text{s.t.}\ \eta\in \bigcup_{i=1}^{m-1} F_{2i}\right\}.
\end{align*}
Let $X_k=X_{F_k}$. By construction we have 
\begin{align*}
& X_1\supset X_3\supset \cdots \supset X_{2m-1}\supset \cdots\\
& X_2\supset X_4\supset \cdots \supset X_{2m}\supset \cdots.
\end{align*}
Let $X_+=\bigcap_{m\geq1} X_{2m-1}$, $X_-=\bigcap_{m\geq1} X_{ 2m}$ and $X=X_+\cup X_-$.
Note that $X_\pm\neq\emptyset$ since $\pm1^\infty \in X_\pm$.
By construction we have
$\mathcal{P}_{X_+, \ell_{2m-1}}=\Sigma_+^{2l_{2m-1}-1}\,\setminus\, F_{2m-1}$ and 
$\mathcal{P}_{X_-, \ell_{2m}}=\Sigma_+^{2l_{2m}-1}\,\setminus\, F_{2m}$
for $m\geq1$.
We will choose $\{\ell_k\}$ and $\{r_k\}$ in such a way that $h_{{\scriptscriptstyle top}}(X_1)>h_{{\scriptscriptstyle top}}(X_2)>h_{{\scriptscriptstyle top}}(X_3)> \cdots >h_{{\scriptscriptstyle top}}(X_{2m-1})>h_{{\scriptscriptstyle top}}(X_{2m})>h_{{\scriptscriptstyle top}}(X_{2m+1})>\cdots$ and  to keep this kind of condition after imbedding in two dimensions.

\begin{figure}[htbp]
\begin{center}
\includegraphics[width=300pt]{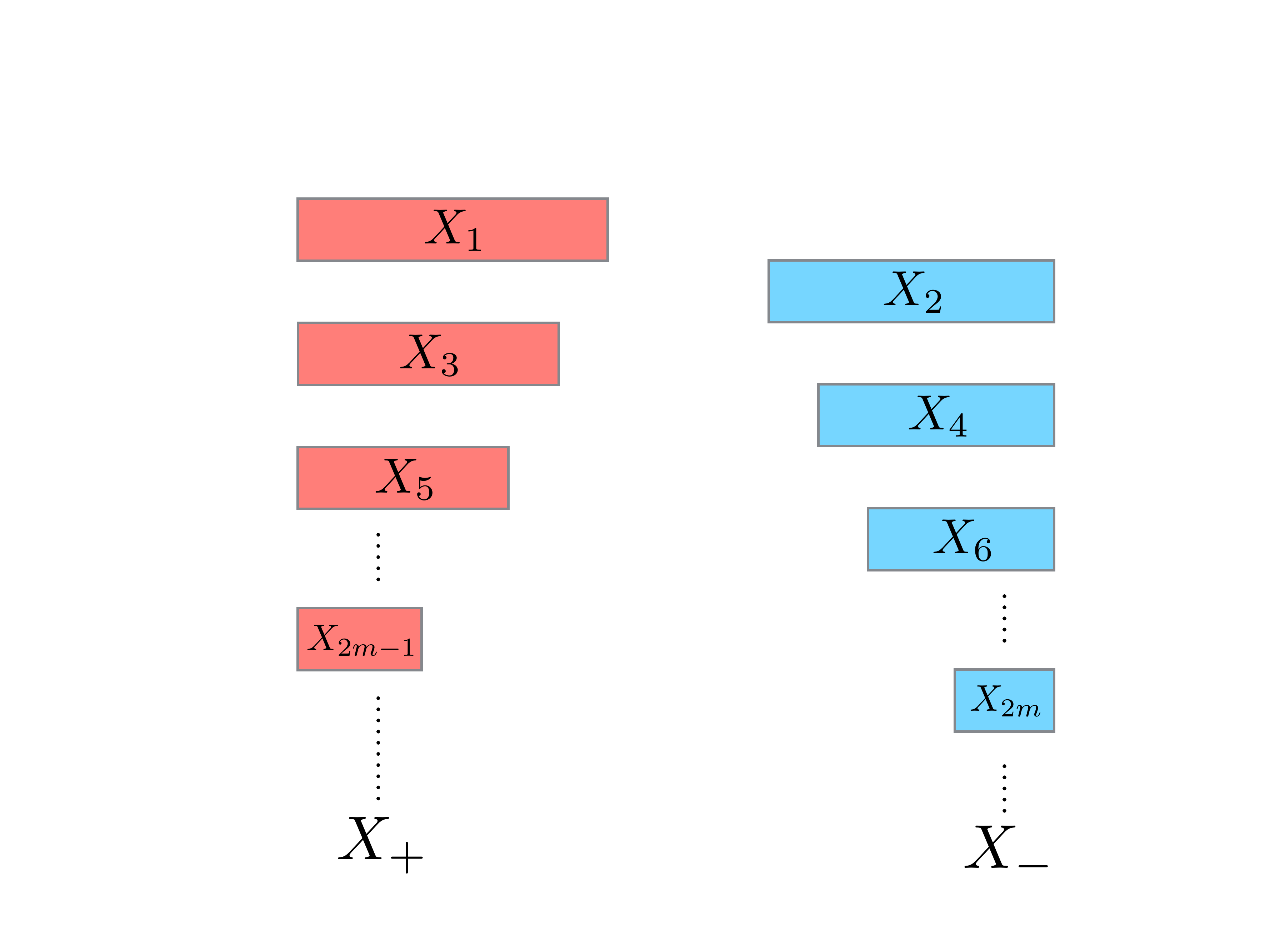}
\caption{Sketch of the sequences of nested subshifts of finite type.} \label{sequence of subshifts.}
\end{center}
\end{figure} 

%------------------------------------------------------------------------------------------------------------------------------------------------------------------------------------------
Applying Theorem \ref{imbedding} to the subshift $X$,
we know that there exist an alphabet $\breve{B}\subset \Sigma\times  \mathcal{C}$
and a two-dimensional SFT $Y$ over $\breve{B}$.

We need to blow up the number of patterns in $Y$.
We replace $\Sigma$ in $\breve{B}$ by
$\widetilde{\Sigma}=\{\widetilde{\omega}=(\omega, s) \in \Sigma\times\{0, \tilde{0}, \ast\} : s=\ast\ \text{if}\ \omega\neq 0\}$.
Let $\widetilde{\mathcal{B}}$ be the corresponding alphabet.
We have two possibilities when $\omega=0$, which blows up the number of patterns corresponding to the numbers of $0$.
Denote by $\widetilde{Y}$ the corresponding SFT.
Let $\widetilde{Y}_\pm=\widetilde{\pi}^{-1} X_\pm \cap \widetilde{Y}$ where $\widetilde{\pi}$ is the projection from $\widetilde{\mathcal{B}}$ to $
\Sigma$.

Let $F$ be the (finite) set of forbidden patterns for $\widetilde{Y}$. Elements in $F$ are ``cross-shaped''. 
Define $\Lambda_F=\{(0,0), (\pm1,0), (0, \pm 1)\}$, and a finite-range interaction $\Phi$ by
\[
\Phi_\Lambda(x)=
\begin{cases}
-|\Lambda| & \text{if} \quad \Lambda=\Lambda_F\;\;\text{and}\;\; x_{\Lambda_F}\in F\\
0 & \text{otherwise}. 
\end{cases}
\]
Then, the corresponding locally constant potential $\phi: \widetilde{\mathcal{B}}^{\,\Z^2}\rightarrow \R$ is (recall \eqref{def-phi-from-Phi})
\[
\phi(x)=
\begin{cases}
-1 & \text{if} \quad x_{\Lambda_F}\in F\\
0 & \text{otherwise}. 
\end{cases}
\]
That $\phi$ is locally constant means that for every $p\in F$, $\phi(x)=\phi(x')$ whenever $x,x'\in [p]$. 
Observe that $\phi\equiv0$ on $\widetilde{Y}$, since configurations in $\widetilde{Y}$ have no pattern from $F$.

%%%%%%%%%%%%%%%%%%%%%%%%%%%%%%%%%%%%%%%%%%%%%%%%%%%%%%%%
%%%%%%%%%%%%%%%%%%%%%%%%%%%%%%%%%%%%%%%%%%%%%%%%%%%%%%%%
\section{Estimates on admissible patterns} \label{Estimates-on-admissible-patterns}

We start with the conditions we have to impose on $\{\ell_k\}$ and $\{r_k\}$.
For each $k\geq1$ we require that
\[
\tag{S1}
(2\ell_k-1)\, r_{k} \geq1,
\]
which ensure that there exists at least one admissible string with $0$ of any size.
For $t\in \left[0,1\right]$, let
\begin{equation}\label{def-binary-entropy}
H(t)=-t\log t -(1-t)\log (1-t)
\end{equation}
with the usual convention $0\log0=0$.
Let $\ell_1=2^2$ and $r_1=2^{-1}$. 
Define $\ell_{k+1}$ and $r_{k+1}$ inductively by the following conditions:
\[
\tag{S2}
2^{-1}r_k \log 2 \geq 10 H(r_{k+1})
\]
\[
\tag{S3}
(4\ell_k-2)^{-1}\geq 10\left(r_{k+1} + 2(2\ell_{k+1}-1)^{-2} \log (2\ell_{k+1}-1)\right)
\]
\[
\tag{S4}
\ell_{k+1}\geq 2^{4\ell_k}.
\]
Note that (S3) implies
\[
(4\ell_k-2)^{-1}\geq 10\left(r_{k+1} + (2\ell_{k+1}-1)^{-2} \log c_{\ell_{k+1}}\right)
\]
since $c_n\leq (2n-1)^2$.
Since we can rewrite conditions (S1), (S2), (S3) and (S4) by using recursive functions, 
the subshift $X$ we defined in previous section is effective. 

The input layer of an admissible pattern of $\widetilde{Y}$ can contain forbidden strings
since macro tiles on the computation layer only checks very short substrings of the input.
Such input patterns are said to be locally admissible.
For later use we define the globally admissible set.
For $n\geq1$ define the globally admissible set $G_{\widetilde{Y}, n}$ of $\widetilde{Y}$ with size $n$ by
\[
G_{\widetilde{Y}, n}=\pi^{-1}\, \mathcal{P}_{\widehat{X}, n} \cap \mathcal{P}_{\widetilde{Y},\,n}.
\]
Since we restrict the admissible set of $\widetilde{Y}$ to the elements whose input layer consists of admissible strings of $X$,
an element of $G_{\widetilde{Y}, n}$ can be extended to bigger and bigger squares.
 
Define $m_k=(2\ell_k-1) (2\ell_{k-1}-1)^{-1}$. 
Then an admissible string of $X_-$ with length $2\ell_k-1$ consists of $m_{k}$ admissible strings of $X_-$ with length $2\ell_{k-1}-1$.
However $\mathcal{P}_{X_-,\, \ell_{k-1}}^{m_k} \neq \mathcal{P}_{X_-, \,\ell_k}$, since we may find forbidden strings in the concatenated part.
Let $k$ be even.
Let $B_{+, k-1}$ be the set of admissible strings with more than one $0$, i.e.,
$B_{+, \, k-1} = \mathcal{P}_{X_+,\, \ell_{k-1}} \setminus  \{ +1^{\ell_{k-1}}\}$.
Let $C_{+,\, k}$ be the set of strings obtained by concatenating alternatively strings from $B_{+, \,k-1}$ and the string $+1^{2\ell_{k-1}-1}$:
\begin{align*}
C_{+,\, k} = & \left\{\omega_1 \cdots \omega_{m_k}\in \mathcal{P}_{X_+,\, \ell_{k-1}}^{m_k}: 
\omega_m\in B_{+,k-1}\ \text{if}\ m\ \text{is odd}\right. \\
& \left. \;\;\qquad\qquad\qquad\qquad\qquad\text{and}\ \omega_m=(+1)^{2\ell_{k-1}-1}\ \text{if}\ m\ \text{is even}\right\}\,.
\end{align*}
Since the string $(+1)^{2\ell_{k-1}-1}$ appears between strings from $B_{+,\,k-1}$, no forbidden string appears in the concatenated parts.
Hence we have $C_{+,\, k}\subset \mathcal{P}_{X_+, \, \ell_k}$.
We define $B_{-,\, k}$ and $C_{-,\, k}$ for odd $k$ in the same way.

We use the following lemma to show Proposition \ref{pattern lemma}.
\begin{lemma}
We have 
\begin{equation}
\label{string_relation_odd}
|\mathcal{P}_{X_-,\, \ell_k}|^{10}\leq |C_{+,\, k}| \quad\text{if}\;\;k\;\;\text{is odd}
\end{equation}
and 
\begin{equation}
\label{string_relation_even}
|\mathcal{P}_{X_+,\, \ell_k}|^{10}\leq |C_{-, \, k}| \quad\text{if}\;\;k\;\;\text{is even}\,.
\end{equation}	
\end{lemma}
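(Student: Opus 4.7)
By the symmetry $+\leftrightarrow-$ in the construction, the two inequalities are structurally identical, so I would focus on just one of them and relabel $\Sigma_+$ as $\Sigma_-$ (and $C_{+,k}$ as $C_{-,k}$, etc.) to obtain the other; note that the definition of $C_{\pm,k}$ in the previous section is naturally available for the corresponding parity of $k$. My plan is to bound $|\mathcal{P}_{X_\pm,\ell_k}|^{10}$ above by a direct entropy estimate and $|C_{\mp,k}|$ below via its product structure. For the upper bound, the relevant admissible patterns of length $2\ell_k-1$ are words of $\Sigma_\pm^{2\ell_k-1}$ with fewer than $r_k(2\ell_k-1)$ zeros (by $F_k$), so the classical estimate $\sum_{j\le rn}\binom{n}{j}\le\exp(nH(r))$ valid for $r\le 1/2$ gives $|\mathcal{P}_{X_\pm,\ell_k}|^{10}\le\exp\bigl(10(2\ell_k-1)H(r_k)\bigr)$. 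For the lower bound, the product structure of $C_{\mp,k}$ yields immediately
\[
|C_{\mp,k}|\;\ge\;|B_{\mp,k-1}|^{\lfloor m_k/2\rfloor}\qquad\text{with}\qquad m_k=\frac{2\ell_k-1}{2\ell_{k-1}-1},
\]
since the $\lfloor m_k/2\rfloor$ odd-indexed slots may be filled independently by any element of $B_{\mp,k-1}$.

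The crux is then a sufficiently strong lower bound on $|B_{\mp,k-1}|$, which I would obtain by counting placements of zeros. By the Stirling-type estimate $\binom{n}{rn}\ge\exp(nH(r))/(n+1)$, the number of words of $\Sigma_\mp^{2\ell_{k-1}-1}$ containing exactly $\lfloor r_{k-1}(2\ell_{k-1}-1)\rfloor$ zeros is at least $\exp\bigl((2\ell_{k-1}-1)H(r_{k-1})\bigr)/(2\ell_{k-1})$, the count being non-vacuous thanks to (S1). A Chernoff-type concentration argument combined with a union bound then shows that at least a constant fraction of these placements avoid every forbidden substring from $F_{k-3},F_{k-5},\dots$: indeed, by iterating (S2) we have $r_{k-1}\ll r_{k-3}\le r_{k-5}\le\cdots$, so the probability that a random substring of length $2\ell_j-1$ ($j<k-1$) reaches frequency $r_j$ is exponentially small in $\ell_j$, which is amply controlled by the extremely rapid growth of $\ell_j$ given by (S4). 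This yields
\[
|B_{\mp,k-1}|\;\gtrsim\;\frac{\exp\bigl((2\ell_{k-1}-1)H(r_{k-1})\bigr)}{2\ell_{k-1}}.
\]

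Putting everything together and using $\lfloor m_k/2\rfloor(2\ell_{k-1}-1)\asymp(2\ell_k-1)/2$,
\[
\log|C_{\mp,k}|\;\gtrsim\;\tfrac12(2\ell_k-1)H(r_{k-1})-\tfrac{m_k}{2}\log(2\ell_{k-1}),
\]
and I would compare this with the target $10(2\ell_k-1)H(r_k)$. Condition (S2) applied at index $k-1$ yields $H(r_k)\le r_{k-1}(\log 2)/20$, and the elementary inequality $H(r)\ge r\log 2$ valid on $(0,1/2]$ then gives $H(r_{k-1})\ge 20H(r_k)$; hence the main term $\tfrac12(2\ell_k-1)H(r_{k-1})$ already dominates $10(2\ell_k-1)H(r_k)$. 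The subleading $\tfrac{m_k}{2}\log(2\ell_{k-1})$ correction is absorbed by the slack built into (S3), which bounds $r_k$ and the polynomial factor $\log(2\ell_k-1)/(2\ell_k-1)^2$ tightly enough. The main obstacle I expect is precisely the exponential lower bound on $|B_{\mp,k-1}|$: a naive single- or few-zero count gives only $O(\ell_{k-1}^{O(1)})$, numerically insufficient against the factor $10$ in the exponent on the left, so one really needs the full Stirling estimate together with the Chernoff verification that the local sub-constraints from earlier levels are harmless. The tight interplay of (S1)--(S4) is exactly what makes the resulting chain of inequalities close.
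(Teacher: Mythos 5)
Your overall skeleton matches the paper's: the same binomial--entropy upper bound $|\mathcal{P}_{X_\pm,\ell_k}|^{10}\leq \e^{10(2\ell_k-1)H(r_k)}$, the same lower bound $|C_{\mp,k}|\geq|B_{\mp,k-1}|^{m_k/2}$ coming from the independent odd-indexed slots, and (S2) as the mechanism that closes the gap. Where you depart is the bound on $|B_{\mp,k-1}|$, and the departure is not an improvement.

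The paper does not use Stirling. It lower-bounds the count by the inequality proved in its appendix,
\[
\sum_{i\leq (2\ell_{k-1}-1)r_{k-1}}\binom{2\ell_{k-1}-1}{i}\;\geq\;\sum_{i\leq (2\ell_{k-1}-1)r_{k-1}}\binom{\lfloor(2\ell_{k-1}-1)r_{k-1}\rfloor}{i}\;=\;2^{\lfloor(2\ell_{k-1}-1)r_{k-1}\rfloor},
\]
so that $|C_{+,k}|\geq 2^{(2\ell_k-1)r_{k-1}/2}$ with \emph{no} polynomial correction, and the upper bound matches because (S2) gives $10H(r_k)\leq\tfrac12 r_{k-1}\log2$. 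The two sides are compared in the same currency $2^{(2\ell_k-1)r_{k-1}/2}$ and the lemma is immediate. Your Stirling route $\binom{n}{rn}\geq \e^{nH(r)}/(n+1)$ carries a $1/(n+1)$ factor which, raised to the $m_k/2$ power, costs $\tfrac{m_k}{2}\log(2\ell_{k-1})$ in the exponent. You assert that (S3) absorbs this, but (S3) constrains $r_{k+1}$ and the computation-layer count $c_{\ell_{k+1}}$ and has nothing to do with this correction. Worse, the absorption is genuinely in doubt: (S1) only guarantees $(2\ell_{k-1}-1)r_{k-1}\geq 1$, and in the regime where this is tight one has $H(r_{k-1})\approx \log(2\ell_{k-1})/(2\ell_{k-1})$, so your correction term is of the \emph{same order} as the leading term and the slack from $H(r_{k-1})\geq 20H(r_k)$ does not obviously rescue it. The appendix inequality exists precisely to avoid this.

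Your diagnosis that ``a naive single- or few-zero count gives only $O(\ell_{k-1}^{O(1)})$, numerically insufficient'' is the step that led you astray: the cure is to sum \emph{all} binomials up to $\lfloor \alpha n\rfloor$ (which already gives $2^{\lfloor\alpha n\rfloor}$), not to reach for the sharper single-term Stirling estimate with its troublesome denominator. Your proposed Chernoff-plus-union-bound step to discount words that violate lower-level constraints is a legitimate concern in principle (the paper simply asserts $|B_{+,k-1}|$ equals the naive binomial count minus one and does not supply such an argument), but it adds a substantial burden of verification that the paper sidesteps, and in your write-up it is gestured at rather than carried out. As written, the proposal has a gap at the correction term and a promissory note at the Chernoff step; the paper's version of the same strategy avoids both.
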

\begin{proof}
Assume $k$ is odd. By (S2) we have the following.
\begin{align*}
|\mathcal{P}_{X_-,\ell_k}|^{10}
& \leq \left(\sum_{i=0}^{(2\ell_k-1)\, r_k} \binom{2\ell_k-1}{i} \right)^{10} 
\leq \left(\e^{(2\ell_k-1)H(r_k)}\right)^{10}\\
&= \e^{10(2\ell_k-1) H(r_{k})}\\
& \leq \e^{2^{-1} (2\ell_k-1)\, r_{k-1}}.
\end{align*}
Now
\begin{align*}
|C_{+, k}| 
&=|B_{+,\, k-1}|^{2^{-1}m_k}
=\left(\sum_{i \leq (2\ell_{k-1}-1)\,r_{k-1}}\binom{ 2\ell_{k-1}-1}{i}-1\right)^{2^{-1}m_k}\\
& \geq \left(\sum_{i \leq (2\ell_{k-1}-1)\, r_{k-1}}
\binom{ (2\ell_{k-1}-1)\,r_{k-1}}{i}\right)^{2^{-1}m_k}\\
& =\e^{ (2\ell_{k-1}-1)\, r_{k-1} 2^{-1}m_k}
= \e^{2^{-1}(2\ell_k-1)\, r_{k-1}}\,.
\end{align*}
Hence we get \eqref{string_relation_odd}.
The proof of \eqref{string_relation_even} is very similar and thus left to the reader.
\end{proof}

For each $n\geq1$ set the globally admissible set $G_{\widetilde{Y}_\pm, n}$ of $\widetilde{Y}_\pm$ with size $n$ by
\[
G_{\widetilde{Y}_\pm, n}=\pi^{-1}\, \mathcal{P}_{\widehat{X}_\pm, n} \cap \mathcal{P}_{\widetilde{Y},\,n}.
\]

%-------------------------------------------------------------
\begin{proposition}\label{pattern lemma}
We have 
\[
|G_{\widetilde{Y}_-,\ \ell_k}|^{10}\leq |G_{\widetilde{Y}_+,\ \ell_k}|\quad \text{if}\;\;k\;\;\text{is odd}
\]
and
\[
|G_{\widetilde{Y}_+,\ \ell_k}|^{10}\leq |G_{\widetilde{Y}_-,\ \ell_k}|\quad \text{if}\;\;k\;\;\text{is even}.
\]
\end{proposition}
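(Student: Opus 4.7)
The plan is to count patterns in $G_{\widetilde{Y}_\pm,\ell_k}$ by decomposing each such pattern into three independent pieces: the 1D string $\omega\in\mathcal{P}_{X_\pm,\ell_k}$ that is stacked vertically to form the $\Sigma$-projection, the tilde-decoration of the zeros of $\widehat{\omega}$ on the $(2\ell_k-1)\times(2\ell_k-1)$ square, and the computation layer sitting above. Proposition \ref{computation layer} yields a factor of $c_{\ell_k}$ independent of the input string, and since the tilde-decoration is imposed site by site with no coupling across rows or columns, each of the $z(\omega)$ zero positions in each of the $2\ell_k-1$ rows independently carries two symbols in $\widetilde{\Sigma}$. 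This gives the closed formula
\[
|G_{\widetilde{Y}_\pm,\ell_k}|\;=\;c_{\ell_k}\sum_{\omega\in \mathcal{P}_{X_\pm,\ell_k}} 2^{(2\ell_k-1)\,z(\omega)},
\]
where $z(\omega)$ denotes the number of zeros in the string $\omega$.

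I treat $k$ odd in detail; the case $k$ even is symmetric after exchanging the roles of $+$ and $-$. For the upper bound on the LHS, combining the identity above with the zero-count estimate $z(\omega)\leq (2\ell_k-1)r_k$ on $\mathcal{P}_{X_-,\ell_k}$ (the same bound used in the preceding lemma) gives
\[
|G_{\widetilde{Y}_-,\ell_k}|^{10}\;\leq\; c_{\ell_k}^{10}\,|\mathcal{P}_{X_-,\ell_k}|^{10}\, 2^{10(2\ell_k-1)^2 r_k}.
\]
For the lower bound on the RHS, I restrict the sum to $C_{+,k}\subset\mathcal{P}_{X_+,\ell_k}$; since the defining blocks of $C_{+,k}$ are independent, the weighted sum factorises as
\[
\sum_{\omega\in C_{+,k}} 2^{(2\ell_k-1)z(\omega)}\;=\;\left(\sum_{\omega\in B_{+,k-1}} 2^{(2\ell_k-1)z(\omega)}\right)^{m_k/2}.
\]
Retaining inside only the strings of $B_{+,k-1}$ with the maximum zero count $\lfloor(2\ell_{k-1}-1)r_{k-1}\rfloor$ produces a blow-up factor of $2^{(2\ell_k-1)^2 r_{k-1}/2}$, while the associated binomial count reproduces, by exactly the estimate used in the lemma, a further factor at least $|\mathcal{P}_{X_-,\ell_k}|^{10}$.

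After cancelling $|\mathcal{P}_{X_-,\ell_k}|^{10}$ from both sides, the proposition reduces to the scalar inequality
\[
9\log c_{\ell_k}\;+\;10\,(2\ell_k-1)^2\, r_k\,\log 2\;\leq\;(2\ell_k-1)^2\, r_{k-1}\,(\log 2)/2,
\]
and this is exactly the quantitative content of conditions (S2) and (S3), together with the bound $c_{\ell_k}\leq(2\ell_k-1)^2$ from Proposition \ref{computation layer}: (S3) pins $r_k$ and $\log c_{\ell_k}$ at the correct scale relative to $1/\ell_{k-1}$, while (S2) provides the $\log 2$-gap making $10r_k<r_{k-1}/2$. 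The main obstacle I anticipate is the extreme sharpness of this scalar inequality: the blow-up factor $2^{(2\ell_k-1)^2 r}$ is doubly exponential in $\ell_k$, so essentially no slack is available in the exponent, and each of the constants $10$, $1/2$ and $\log 2$ appearing in (S2)--(S3) must be tracked faithfully rather than absorbed into an ``$O(1)$''.
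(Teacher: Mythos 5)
Your skeleton matches the paper's: both start from the counting identity $|G_{\widetilde{Y}_\pm,\ell_k}|=c_{\ell_k}\sum_{\omega\in\mathcal{P}_{X_\pm,\ell_k}}2^{f_0(\omega)(2\ell_k-1)^2}$ coming from Proposition~\ref{computation layer} (you correctly use base $2$ for the $\{0,\tilde 0\}$ decoration — the paper writes $\e$, which appears to be a typo), both bound the $-$ side from above via $f_0\leq r_k$, and both push the $+$ side from below by restricting to $C_{+,k}$ and invoking the preceding lemma. Where you diverge is the estimate on $C_{+,k}$: the paper simply notes that every $\omega\in C_{+,k}$ has at least $m_k/2$ zeros per row (one per odd block), so that $\sum_{\omega\in C_{+,k}}2^{f_0(\omega)(2\ell_k-1)^2}\geq|C_{+,k}|\,2^{(m_k/2)(2\ell_k-1)}$, and the exponent $(2\ell_k-1)^2/(2(2\ell_{k-1}-1))$ is exactly the quantity (S3) is calibrated to dominate $10\,r_k(2\ell_k-1)^2+10\log c_{\ell_k}$. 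Your choice — factorise over blocks and retain only the max-zero-count strings — reverses this: it inflates the weight exponent but shrinks the count.

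That swap creates two concrete gaps. First, the residual count after retaining only strings with $j=\lfloor(2\ell_{k-1}-1)r_{k-1}\rfloor$ zeros is $\binom{2\ell_{k-1}-1}{j}^{m_k/2}$, which is strictly smaller than $|B_{+,k-1}|^{m_k/2}$; so this is not ``exactly the estimate used in the lemma,'' and the claim that it still dominates $|\mathcal{P}_{X_-,\ell_k}|^{10}$ needs its own argument (it can be rescued, but not by citation). Second, your reduced scalar inequality places $(2\ell_k-1)^2 r_{k-1}(\log 2)/2$ on the right, whereas (S3) controls $(2\ell_k-1)^2/\bigl(2(2\ell_{k-1}-1)\bigr)$; to pass from the latter to the former you need $(2\ell_{k-1}-1)r_{k-1}\geq 1/\log 2$, but (S1) only guarantees $(2\ell_{k-1}-1)r_{k-1}\geq 1 < 1/\log 2$. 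There is also a floor-induced loss hidden in your claimed blow-up $2^{(2\ell_k-1)^2 r_{k-1}/2}$, since $j m_k/2$ can fall short of $(2\ell_k-1)r_{k-1}/2$ by up to $m_k/2$. The paper's minimum-zero-count route sidesteps all three issues and is the one (S3) was built for; I'd suggest replacing your max-zero-count step with that bound.
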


\begin{proof}
By Proposition \ref{computation layer} we have 
\begin{align*}
|G_{\widetilde{Y}_{\pm}, \ell_k}|
&=\sum_{p\,\in Y_{\pm,\, \ell_k}} \e^{\text{number of}\;0\text{'s in}\;\pi(p)}\\
& =c_{\ell_k} \sum_{\red{\widehat{\omega}\,\in\, \scaleto{\mathcal{P}_{\widehat{X}_\pm,\ell_k}}{7pt}}}
\e^{\text{number of}\;0\text{'s in}\;\widehat{\omega}}\\
& = c_{\ell_k} \sum_{\omega\,\in\, \mathcal{P}\!_{X_\pm, \,\ell_k}} \e^{f_0(\omega)(2\ell_k-1)^2}.
\end{align*}
Assume $k$ is odd.
By \eqref{string_relation_even} we have
\begin{align*}
|G_{\widetilde{Y}_{-}, \ell_k}|^{10}
&=\left(c_{\ell_k}\sum_{\omega\,\in\, \mathcal{P}_{X_+\!, \ell_k}} \e^{f_0(\omega)(2\ell_k-1)^2}\right)^{10}\\
& \leq | \mathcal{P}_{X_-, \ell_k} |^{10} \e^{10\, r_k(2\ell_k-1)^2+10\log c_{\ell_k}}\\
& \leq |C_{+\!,\, k}| \, \e^{10(r_k(2\ell_k-1)+(2\ell_k-1)^{-1}\log c_{\ell_k}) (2\ell_k-1)}.
\end{align*}
Note that every $\omega$ in $C_{+\!, k}$ satisfies $f_0(\omega) (2\ell_k-1) \geq 2^{-1} m_k$.
By (S3) we have
\begin{align*}
f_0(\omega)(2\ell_k-1)^2
& \geq \frac{2\ell_k-1}{2(2\ell_{k-1}-1)}(2\ell_k-1)\\
&\geq 10  (r_k(2\ell_k-1) + (2\ell_k-1)^{-1}\log c_{\ell_k}) (2\ell_k-1)
\end{align*}
for $\omega \in C_{+\!,k}$.
Hence we have
\begin{align}
|G_{\widetilde{Y}_{-}, \ell_k}|^{10}
\leq \sum_{\omega\, \in\, C_{+\!,k}}\e^{f_0(\omega)(2\ell_k-1)^2}
\leq c_{\ell_k} \sum_{\omega\,\in\, \mathcal{P}_{X_+\!, \ell_k}} \e^{f_0(\omega)(2\ell_k-1)^2}
=|G_{\widetilde{Y}_+\!, \ell_k}|.
\label{GYplus}
\end{align}	
Therefore, for odd $k$, the proof is finished. We can do very similar calculations for even $k$.
\end{proof}

%%%%%%%%%%%%%%%%%%%%%%%%%%%%%%%%%%%%%%%%%%%%%%%%%%%%%%%%
%%%%%%%%%%%%%%%%%%%%%%%%%%%%%%%%%%%%%%%%%%%%%%%%%%%%%%%%
\section{Proof of the main theorem } \label{Non-convergence}

We now prove Theorem \ref{maintheorem}. 
For each $\beta>0$, we pick an arbitrary equilibrium state for $\beta\phi$ (there can be several) and we consider the resulting one-parameter
family $(\mu_{\beta})_{\beta>0}$. 
Let $M_{k}$ be the minimum size of the macro-tiles whose simulation checks all substrings with length less than $\ell_k$.
For each $k\geq 1$ we denote by $\theta(k)$ the smallest index which satisfies $2\ell_{\theta(k)}-1\geq 2 M_k$.
By this choice at least one macro tile with size $M_k$ should appear in the box with size $\ell_{\theta(k)}$.

Our main theorem follows from the following proposition and the rest of this paper is dedicated to prove it.

%\textcolor{red}{Refer to the definitions of $G_{\widetilde{Y}_-,\, \ell_{k+1}}$ and $G_{\widetilde{Y}_-,\, \ell_{k+1}}$ (which are not given, by the way).}

\begin{proposition} \label{oscillation}
Take an arbitrary $\delta\in (0,1]$. Then for all $k$ large enough we have
\begin{align*}
\mu_{\beta_{k}}\left( [G_{\widetilde{Y}_-,\, \ell_{k+1}}] \right) & \geq 1-\delta \quad \text{if}\quad k\;\; \text{is even},\\
\mu_{\beta_{k}}\left([G_{\widetilde{Y}_+,\, \ell_{k+1}}] \right)& \geq 1-\delta \quad \text{if}\quad  k\;\; \text{is odd}.
\end{align*}
\end{proposition}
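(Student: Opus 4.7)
I would treat the case $k$ odd; the case $k$ even is symmetric. Set $n=\ell_{k+1}$. By Proposition \ref{pattern lemma} (in the direction needed here), one of the two sets $G_{\widetilde{Y}_\pm,n}$ overwhelmingly outnumbers the other, with $|G_{\widetilde{Y}_-,n}|^{10}\le|G_{\widetilde{Y}_+,n}|$. The plan is to upgrade this combinatorial dominance to a measure-theoretic one via the DLR/Gibbs description of equilibrium states.

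First, I would invoke the local Gibbs description of $\mu_{\beta_k}$: conditionally on a boundary configuration $\omega$ outside a suitably enlarged box $\Lambda_n^+$ (enlarged by the interaction range, which is $1$ for our cross-shaped $\Phi$), the restriction to $\Lambda_n$ is an explicit finite Gibbs distribution with weights $e^{\beta_k H(p;\omega)}$ and normalizing constant $Z(\beta_k;\omega)$. Since $\phi$ vanishes on $\widetilde{Y}$ and equals $-1$ on each $F$-cross, $H(p;\omega)$ equals $-5$ times the number of $F$-crosses in $(p,\omega)$ centered in $\Lambda_n^+$. Every $p\in G_{\widetilde{Y}_+,n}$ with a $\widetilde{Y}_+$-compatible boundary contributes weight $1$ to the partition function, so $Z(\beta_k;\omega)\ge|G_{\widetilde{Y}_+,n}|$ for such good boundaries.

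Next I would split the complement $[G_{\widetilde{Y}_+,n}]^c$ according to the $\Lambda_n$-pattern $p$ into three classes: (i) $p\in G_{\widetilde{Y}_-,n}$, (ii) $p\in\mathcal{P}_{\widetilde{Y},n}\setminus G_{\widetilde{Y},n}$ (locally admissible but failing to project to $\widehat{X}_\pm$), and (iii) $p$ contains at least one $F$-cross. Class (i) contributes at most $|G_{\widetilde{Y}_-,n}|/|G_{\widetilde{Y}_+,n}|\le|G_{\widetilde{Y}_+,n}|^{-9/10}$, which tends to $0$ as $k\to\infty$ by Proposition \ref{pattern lemma}. Class (iii) contributes at most $|\widetilde{\mathcal{B}}|^{\lambda_n}e^{-5\beta_k}/|G_{\widetilde{Y}_+,n}|$, which is negligible as soon as $\beta_k$ is large compared to $\lambda_n\log|\widetilde{\mathcal{B}}|$. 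For class (ii) I would invoke Proposition \ref{computation layer} together with the macro-tile parameter $\theta$: by definition $M_{k+1}$ is the macro-tile size needed to verify $X$-admissibility of substrings of length $<\ell_{k+1}$, and $\theta$ is chosen so that such a macro-tile fits inside $\Lambda_n=\Lambda_{\ell_{k+1}}$. Thus the input layer of a locally admissible $p$ must already be an admissible string of $X$, and the counting gap $|\mathcal{P}_{\widetilde{Y},n}\setminus G_{\widetilde{Y},n}|$ is at most a polynomial factor $c_n\le(2n-1)^2$ times $|G_{\widetilde{Y},n}|$, which is absorbed into the $G_+$ bound.

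Integrating over $\omega$, after showing that the good-boundary event itself has near-full $\mu_{\beta_k}$-measure by running the same argument at a slightly larger scale, yields $\mu_{\beta_k}([G_{\widetilde{Y}_+,n}]^c)\le\delta$ for $k$ large. The main obstacle will be class (ii): making the macro-tile combinatorics of Section \ref{Imbedding-Proposition} yield the right counting at scale $\ell_{k+1}$, so that the conditions (S1)--(S4) and the definition of $\theta$ indeed force locally admissible 2D patterns to project cleanly onto admissible 1D strings. A parallel delicacy is tuning the sequence $\beta_k$: it must be large enough to suppress class (iii) at scale $\ell_{k+1}$ but correctly positioned so that at $\beta_{k+1}$ the same mechanism swings the concentration to the opposite component, which is precisely the oscillation encoded by the proposition.
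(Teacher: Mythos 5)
Your approach is genuinely different from the paper's: you propose a direct DLR/conditional-Gibbs estimate inside $\Lambda_n$, whereas the paper argues by contradiction entirely at the level of measures, combining the variational principle (to lower-bound $h(\mu_{\beta_k})$ via a block-independent measure $\nu_{k+1}^-$ supported in $[G_{\widetilde{Y}_-,\ell_{k+1}}]$), an ergodic-theorem construction of a high-probability set $E_{n,k}$, a counting bound on $|E_{n,k}|$, and the $L^1$ Shannon--McMillan--Breiman theorem. Unfortunately your route has a fatal gap at its very first quantitative step: the claim $Z(\beta_k;\omega)\ge |G_{\widetilde{Y}_+,n}|$ for a ``good'' boundary. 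Replacing the actual interior $\omega|_{\Lambda_n}$ by an arbitrary $p\in G_{\widetilde{Y}_+,n}$ will in general create $F$-crosses on the boundary of $\Lambda_n$, because $\widetilde{Y}$ is built from a self-simulating Wang tiling whose boundary colors essentially determine the interior at the coarse scales: the macro-tile boundary colors encode positions, macro-colors and partial computations, so a fixed exterior $\omega$ is compatible with very few (often exactly one, up to the low-entropy $0/\tilde0$ freedom) interior completions without introducing mismatches. Hence each alternative $p$ gets penalized by $e^{-5\beta_k\cdot(\text{boundary mismatches})}$ and the honest lower bound is $Z(\beta_k;\omega)\gtrsim 1$, not $|G_{\widetilde{Y}_+,n}|$. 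Since classes (i)--(iii) are all bounded against this (incorrect) denominator, the whole estimate collapses. This rigidity of the tiling SFT is precisely the reason the paper avoids any configuration-level boundary conditioning and works with entropies and Birkhoff averages instead.

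Two further issues compound this. First, your treatment of class (ii) conflates scales: $M_{k+1}$-sized macro-tiles, which are the ones that check all $X$-substrings of length $<\ell_{k+1}$, only provably appear inside boxes of side $\ell_{\theta(k+1)}$, not inside $\Lambda_{\ell_{k+1}}$; the definition of $\theta$ is exactly ``the smallest index with $2\ell_{\theta(k)}-1\ge 2M_k$'', and in the paper this mismatch is handled by the inclusion $[G_{\widetilde{Y},\ell_{k+1}}]\supset[\mathcal{P}_{\widetilde{Y},\ell_{\theta(k+1)}}]$ inside Lemma \ref{Enk}, together with the Birkhoff estimate \eqref{Sn1} which controls the density of $F$-crosses at the larger scale $\ell_{\theta(k+1)}$. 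So $\mathcal{P}_{\widetilde{Y},\ell_{k+1}}\setminus G_{\widetilde{Y},\ell_{k+1}}$ is not merely a ``polynomial $c_n$ factor''; controlling it requires the $E_{n,k}$ machinery. Second, the ``show the good-boundary event has near-full measure at a slightly larger scale'' step is circular: establishing that the exterior configuration is $\widetilde{Y}_\pm$-compatible with high $\mu_{\beta_k}$-probability is essentially a restatement of the proposition to be proved. The paper sidesteps both problems by never estimating conditional probabilities and instead squeezing $h(\mu_{\beta_k})$ between the lower bound coming from $\nu_{k+1}^-$ and the upper bound coming from $|E_{n,k}|$, where the gap $\delta'$ between $(1-\delta'/2)h_{k+1}^-$ and $(1-\delta')h_{k+1}^-$ produces the contradiction after dividing by $h_{k+1}^-$.
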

Invoking the ergodic decomposition \cite[Chapter 14]{Georgii}, 
we can restrict ourselves to ergodic equilibrium states in the proof of Proposition \ref{oscillation}.

Since $\bigcap_{k\geq1} [G_{\widetilde{Y}_-, \ell_k}]=\widetilde{Y}_-$, $\bigcap_{k\geq1}[G_{\widetilde{Y}_+, \ell_k}]=\widetilde{Y}_+$
and $\widetilde{Y}_-\cap \widetilde{Y}_+=\emptyset$, the proposition shows that the one-parameter family $(\mu_\beta)_{\beta>0}$ does not converge. 

To prepare the proof of the above proposition, we need the following three lemmas.

\begin{lemma}\label{weight lemma on the expectation}
Let $C=\log\big|\widetilde{\mathcal{B}}\big|$.
Then $\int \phi \dd\mu_{\beta}\geq -C\beta^{-1}$ for any $\beta>0$.
\end{lemma}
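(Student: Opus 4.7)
The plan is to exploit the variational principle characterising $\mu_\beta$ as an equilibrium state, and to compare it with any shift-invariant measure sitting on $\widetilde{Y}$. Since $\mu_\beta$ is an equilibrium state for $\beta\phi$, it satisfies
\[
\beta\int\phi\dd\mu_\beta + h(\mu_\beta) \;\geq\; \beta\int\phi\dd\nu + h(\nu)
\]
for every shift-invariant probability measure $\nu$ on $\widetilde{\mathcal{B}}^{\,\Z^2}$.

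The key point is to pick a convenient $\nu$. The subshift $\widetilde{Y}$ is nonempty (indeed it contains $\widetilde{Y}_\pm$), and it is compact and shift-invariant, so by the Krylov--Bogolyubov argument it carries at least one shift-invariant probability measure $\nu_0$. Since every configuration in $\widetilde{Y}$ avoids all patterns in $F$, the observation made right after the definition of $\Phi$ gives $\phi \equiv 0$ on $\widetilde{Y}$, hence $\int\phi\dd\nu_0 = 0$. Entropy is always nonnegative, so $h(\nu_0)\geq 0$.

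Plugging $\nu=\nu_0$ into the variational inequality yields $\beta\int\phi\dd\mu_\beta + h(\mu_\beta) \geq 0$, i.e.
\[
\beta\int\phi\dd\mu_\beta \;\geq\; -h(\mu_\beta).
\]
Finally, the standard maximum-entropy bound for a shift-invariant measure on the full shift over $\widetilde{\mathcal{B}}$ gives $h(\mu_\beta) \leq \log\big|\widetilde{\mathcal{B}}\big| = C$. Dividing by $\beta>0$ gives $\int\phi\dd\mu_\beta \geq -C\beta^{-1}$, which is the claim. There is no real obstacle here: the only thing to be mildly careful about is asserting the existence of some shift-invariant measure on $\widetilde{Y}$, which follows from nonemptiness, compactness and shift-invariance of $\widetilde{Y}$.
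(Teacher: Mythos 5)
Your proof is correct and follows essentially the same route as the paper: compare $\mu_\beta$ against some shift-invariant measure supported on $\widetilde{Y}$ (where $\phi\equiv 0$), conclude $P(\beta\phi)\geq 0$, and then use $h(\mu_\beta)\leq\log|\widetilde{\mathcal{B}}|$. The only cosmetic difference is that you phrase the variational principle as a direct inequality between $\mu_\beta$ and $\nu_0$, whereas the paper passes explicitly through the pressure $P(\beta\phi)$; the content is identical.
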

\begin{proof}
There exists at least one shift-invariant measure $\mu$ whose support is contained in $\widetilde{Y}$.
Since $\phi\equiv 0$ on $\widetilde{Y}$, this implies that $\int \phi \dd\mu=0$. Hence, since $h(\mu)\geq0$, we get
\[
P(\beta \phi)\geq h(\mu)+\beta \int \phi\dd\mu \geq 0.
\]

Since $\mu_{\beta}$ is an equilibrium state for $\beta\phi$, we have
\[
h(\mu_{\beta})+\beta\int \phi \dd\mu_{\beta}=P(\beta\phi)\geq0.
\]
Therefore
\[
\int \phi \dd\mu_{\beta}\geq-\beta^{-1}h(\mu_{\beta})
\geq  -\beta^{-1}\log \big|\widetilde{\mathcal{B}}\big|. 
\]
\end{proof}

Given $n\geq1$ and a function $\psi: \widetilde{\mathcal{B}}^{\,\Z^2}\rightarrow \R$, let $S_n\psi=\sum_{i\in \Lambda_n} \psi \circ \sigma^i$.

Let $C'=2C$ and for $n,k \geq1$ define $E_{n,k}$ as the set of configurations $y\in \widetilde{\mathcal{B}}^{\,\Z^2}$ satisfying the following conditions:
\begin{align}
\frac{S_{n+\ell_{\theta(k+1)}-1} \phi(y)}{\lambda_{n+\ell_{\theta(k+1)}-1}}  \geq \int \phi \dd\mu_{\beta_{k}} -C \beta_{\ell_{k}}^{-1}
\label{Sn1}
\end{align}
and 
\begin{align}
\frac{1}{\lambda_n} S_n\chi_{[G_{\widetilde{Y}_-, \ell_{k+1}}]} (y) \leq \mu_{\beta_{k}} \left([G_{\widetilde{Y}_-,\ell_{k+1}}]\right) + C'2^{-2\ell_{k}}.
\label{Sn2}
\end{align}

\begin{lemma} \label{estimate of Enk}
Take $k\geq1$ and $\varepsilon>0$. 
Let $\mu_{\beta_{k}}$ be an ergodic equilibrium state for $\beta_{k}\phi$.
Then for all $n$ large enough
\begin{align}
\mu_{\beta_{k}}(E_{n,k})>1-\varepsilon. \label{probability of Enk}
\end{align}
\end{lemma}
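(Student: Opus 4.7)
The plan is to apply the pointwise ergodic theorem for $\Z^2$-actions (valid since $\mu_{\beta_k}$ is assumed ergodic) to each of the two functions appearing in the definition of $E_{n,k}$, and then intersect the two full-measure sets thereby obtained.

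First, I would observe that the defining inequalities of $E_{n,k}$ are of the form ``Birkhoff average over a cube is close to the integral,'' so they are precisely the kind of statement controlled by the ergodic theorem. More concretely, since $\phi$ is bounded and measurable and $\mu_{\beta_k}$ is ergodic and shift-invariant, the pointwise ergodic theorem for cubes $\Lambda_m$ gives, for $\mu_{\beta_k}$-almost every $y$,
\[
\lim_{m\to\infty} \frac{S_m\phi(y)}{\lambda_m} = \int \phi \dd\mu_{\beta_k}.
\]
Taking $m = n + \ell_{\theta(k+1)} - 1$, which tends to $\infty$ as $n\to\infty$, this means the set $A_n^{(1)}$ of configurations satisfying \eqref{Sn1} (which even allows a positive slack $C\beta_{\ell_k}^{-1}$) has $\mu_{\beta_k}$-measure tending to $1$; so for $n$ large, $\mu_{\beta_k}(A_n^{(1)})>1-\varepsilon/2$.

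Next, I would apply the same ergodic theorem to the indicator function $\chi_{[G_{\widetilde{Y}_-,\ell_{k+1}}]}$, again bounded and measurable, to get that for $\mu_{\beta_k}$-almost every $y$,
\[
\lim_{n\to\infty} \frac{S_n \chi_{[G_{\widetilde{Y}_-,\ell_{k+1}}]}(y)}{\lambda_n} = \mu_{\beta_k}\bigl([G_{\widetilde{Y}_-,\ell_{k+1}}]\bigr).
\]
Therefore the set $A_n^{(2)}$ of configurations satisfying \eqref{Sn2} (again with a positive slack $C'2^{-2\ell_k}$) has measure tending to $1$, so for $n$ large we have $\mu_{\beta_k}(A_n^{(2)})>1-\varepsilon/2$.

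Since $E_{n,k}=A_n^{(1)}\cap A_n^{(2)}$, a simple union bound gives $\mu_{\beta_k}(E_{n,k})\geq 1 - \varepsilon$ for all $n$ large enough, which is the desired estimate. There is no real obstacle here: the two slacks $C\beta_{\ell_k}^{-1}$ and $C'2^{-2\ell_k}$ are fixed strictly positive quantities (for each fixed $k$), so convergence along the sequence of cubes is more than enough. The only point requiring a small bit of care is that the cube in \eqref{Sn1} has side $n+\ell_{\theta(k+1)}-1$ rather than $n$, but since $k$ is fixed and $n\to\infty$ this cube still exhausts $\Z^2$ and the ergodic theorem applies verbatim.
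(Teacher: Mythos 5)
Your proof is correct and follows exactly the same approach as the paper: apply the pointwise ergodic theorem for $\Z^2$-actions to $\phi$ and to $\chi_{[G_{\widetilde{Y}_-,\ell_{k+1}}]}$, note that the positive slacks make the defining inequalities hold almost surely in the limit, and conclude by a union bound. You merely spell out the intersection step that the paper leaves implicit.
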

\begin{proof}
Since $\mu_{\beta_{k}}$ is ergodic, we have
\begin{align*}
\lim_{n\to\infty} \frac{1}{\lambda_n} S_n\phi(y)
&= \int \phi \dd\mu_{\beta_{k}}\\
\lim_{n\to\infty}\frac{1}{\lambda_n}S_n\chi_{[G_{\widetilde{Y}_-, \ell_{k+1}}]}(y)
&=\mu_{\beta_{k}} \left([G_{\widetilde{Y}_-,\ell_{k+1}}]\right)
\end{align*}
for $\mu_{\beta_{k}}$ almost every point $y$,
which completes the proof.
\end{proof}

We estimate the number of globally admissible patterns in the configurations of $E_{n,k}$.
\begin{lemma} \label{Enk}
For every $k$ there exists $N\geq1$ such that for every $n\geq N$
\begin{align}
\frac{1}{\lambda_n} \#\Big\{i\in \Lambda_n: y_{i+\Lambda_{\ell_{k+1}}\in\, G_{\widetilde{Y},  \ell_{k+1}}}\Big\}
=\frac{1}{\lambda_n} S_n \chi_{[G_{\widetilde{Y}, \ell_{k+1}}]}(y)
>1-C'2^{-2\ell_{k}}
\label{patterns-of-Enk}
\end{align}
for every $y\in E_{n,k}$.
\end{lemma}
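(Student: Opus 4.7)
My plan is to combine condition (\ref{Sn1}) in the definition of $E_{n,k}$ with the lower bound $\int\phi\dd\mu_{\beta_{k}}\geq-C\beta_{k}^{-1}$ from Lemma \ref{weight lemma on the expectation} to bound the total number of forbidden cross-patterns of $y$ inside the enlarged box $\Lambda_{n+\ell_{\theta(k+1)}-1}$, and then to convert this into a bound on the number of bad positions $i\in\Lambda_n$, using the self-simulating tiling construction of Section \ref{Imbedding-Proposition}. Since $-\phi$ is exactly the indicator that a pattern from $F$ is centered at the origin, adding the two bounds yields
\[
-S_{n+\ell_{\theta(k+1)}-1}\phi(y)\;\leq\;C\bigl(\beta_{k}^{-1}+\beta_{\ell_{k}}^{-1}\bigr)\,\lambda_{n+\ell_{\theta(k+1)}-1},
\]
i.e., a uniform-in-$y$ control on the number of forbidden cross-patterns of $y$ whose centers lie in $\Lambda_{n+\ell_{\theta(k+1)}-1}$.

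The heart of the argument is the structural claim that if $y_{i+\Lambda_{\ell_{k+1}}}\notin G_{\widetilde{Y},\ell_{k+1}}$ then $y_{i+\Lambda_{\ell_{\theta(k+1)}}}\notin\mathcal{P}_{\widetilde{Y},\ell_{\theta(k+1)}}$; equivalently, local admissibility of the enlarged window in $\widetilde{Y}$ forces global admissibility of the inner window. Indeed, by the definition of $\theta(k+1)$ the enlarged box is large enough to contain a complete macro-tile of size $M_{k+1}$. By the hierarchical construction recalled in Subsections \ref{Self-simulating structure}--\ref{Simulation to check forbidden strings}, and by the one-to-one correspondence between inputs and computation-layer patterns exploited in the proof of Proposition \ref{computation layer}, local admissibility of this enlarged window means that the macro-tile (and the nested lower-level ones) run $\EuScript{M}_X$ without halting on every substring of the input layer of the relevant length, and that the input layer is vertically constant. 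Consequently the input-layer projection of the inner $\Lambda_{\ell_{k+1}}$-window lies in $\mathcal{P}_{\widehat{X},\ell_{k+1}}$, so $y_{i+\Lambda_{\ell_{k+1}}}\in G_{\widetilde{Y},\ell_{k+1}}$, contradicting the hypothesis.

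Granting this, each forbidden cross-pattern centered at $j\in\Z^2$ is contained in at most $\lambda_{\ell_{\theta(k+1)}}$ enlarged windows $i+\Lambda_{\ell_{\theta(k+1)}}$ with $i\in\Lambda_n$, and hence
\[
\#\bigl\{i\in\Lambda_n:y_{i+\Lambda_{\ell_{k+1}}}\notin G_{\widetilde{Y},\ell_{k+1}}\bigr\}\;\leq\;\lambda_{\ell_{\theta(k+1)}}\,C\bigl(\beta_{k}^{-1}+\beta_{\ell_{k}}^{-1}\bigr)\,\lambda_{n+\ell_{\theta(k+1)}-1}.
\]
Dividing by $\lambda_n$ and choosing $n$ so large that $\lambda_{n+\ell_{\theta(k+1)}-1}/\lambda_n\leq 2$, the frequency of bad positions is bounded by a constant (depending on $k$) times $\beta_{k}^{-1}+\beta_{\ell_{k}}^{-1}$, which for $\beta_k$ large enough---the regime in which Proposition \ref{oscillation} is invoked---is dominated by $C'2^{-2\ell_{k}}$, establishing (\ref{patterns-of-Enk}).

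The only non-routine step is the structural claim above; everything else is the thermodynamic bound of Lemma \ref{weight lemma on the expectation} together with a direct pigeonhole count. The hard part will be to unwind the hierarchical tile-set machinery of Section \ref{Imbedding-Proposition} carefully enough to confirm that the macro-tile of size $M_{k+1}$ guaranteed by the choice of $\theta(k+1)$ really does certify the absence of every forbidden $X$-substring of the inner window, so that local admissibility in the large box propagates to global admissibility in the small one.
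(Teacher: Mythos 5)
Your proposal follows the same approach as the paper: the key inclusion $[\mathcal{P}_{\widetilde{Y},\ell_{\theta(k+1)}}]\subset[G_{\widetilde{Y},\ell_{k+1}}]$ (which you phrase as the contrapositive ``structural claim''), counting bad positions via the forbidden cross-patterns recorded by $-S_{n+\ell_{\theta(k+1)}-1}\phi$ with each one spoiling at most $\lambda_{\ell_{\theta(k+1)}}$ windows, and then the thermodynamic bound from condition \eqref{Sn1} together with Lemma \ref{weight lemma on the expectation}. The paper dismisses the key inclusion with the phrase ``by definition'' and absorbs the factor $\lambda_{\ell_{\theta(k+1)}}\lambda_{n+\ell_{\theta(k+1)}-1}/\lambda_n$ into its choice of $N$ (requiring it to be $<2^{\ell_k}$), whereas you keep the $\lambda_{\ell_{\theta(k+1)}}$ factor explicit and charge it to $\beta_k$ being sufficiently large; these are the same idea packaged slightly differently, and you are correct that the delicate point is the macro-tile argument behind the inclusion, which you sketch in more detail than the paper does.
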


\begin{proof}
Take $N\geq1$ such that $\lambda_n^{-1} \lambda_{n+\ell_{\theta(k+1)}-1} \lambda_{\ell_{\theta(k+1)}}<2^{\ell_{k}}$,
for all $n\geq N$.
Take $n\geq N$ and $y\in E_{n,k}$.
Since by definition $[G_{\widetilde{Y}, \ell_{k+1}}]\supset [\mathcal{P}_{\widetilde{Y}, \ell_{\theta(k+1)}}]$, we have
\[
1-\frac{1}{\lambda_n} S_n\chi_{[G_{\widetilde{Y}, \ell_{k+1}}]} (y)
=\frac{1}{\lambda_n} S_n \chi_{\widetilde{\mathcal{B}}^{\Z^2}\setminus[G_{\widetilde{Y}, \ell_{k+1}}]} (y)\\
\leq \frac{1}{\lambda_n} S_n \chi_{\widetilde{\mathcal{B}}^{\Z^2}\setminus[\mathcal{P}_{\widetilde{Y}, \ell_{\theta(k+1)}}]} (y).
\]
$S_n \chi_{\widetilde{\mathcal{B}}^{\Z^2}\setminus[\mathcal{P}_{\widetilde{Y}, \ell_{\theta(k+1)}}]}  (y)$ is the number of positions in $\Lambda_n$
for which a non-admissible pattern with size $\ell_{\theta(k+1)}$ appears.
Let $i\in \Lambda_{n+\ell_{\theta(k+1)}-1}$ be a position where a forbidden pattern appears, that is, $y_{\Lambda_F+i}\in F$.
Then the patterns in boxes with size $\ell_{\theta(k+1)}$ including $i$ are non-admissible.
The number of such boxes is bounded by $\lambda_{\ell_{\theta(k+1)}}$. 
Since the number of positions in $\Lambda_{n+\ell_{\theta(k+1)}-1}$ where a forbidden pattern appears is $-S_{n+\ell_{\theta(k+1)}-1} \phi(y)$, 
we have
\begin{align*}
& \frac{1}{\lambda_n} S_n \chi_{\widetilde{\mathcal{B}}^{\Z^2}\setminus[\mathcal{P}_{\widetilde{Y}, \ell_{\theta(k+1)}}]} (y)\\
&\leq- \frac{\lambda_{n+\ell_{\theta(k+1)}-1}}{\lambda_n}\frac{1}{\lambda_{n+\ell_{\theta(k+1)}-1}} S_{n+\ell_{\theta(k+1)}-1} \phi (y) \times \lambda_{\ell_{\theta(k+1)}}.
\end{align*}
See Fig. \ref{non admissible patterns}.
By the choice of $n$, \eqref{Sn1} and Lemma \ref{weight lemma on the expectation}, we have
\begin{align*}
1-\frac{1}{\lambda_n} S_n\chi_{[G_{\widetilde{Y}, \ell_{k+1}}]} (y)
&\leq 2^{\ell_{k}}\left(-\int \phi \dd\mu_{\beta_{\ell_{k}}} +C\beta_{\ell_{k}}^{-1}\right)\\
&\leq 2C\beta_{\ell_{k}}^{-1}2\, ^{\ell_{k}}=C' \, 2^{-2\ell_{k}}.
\end{align*}
\end{proof}

\begin{figure}[htbp]
\begin{center}
\includegraphics[width=200pt]{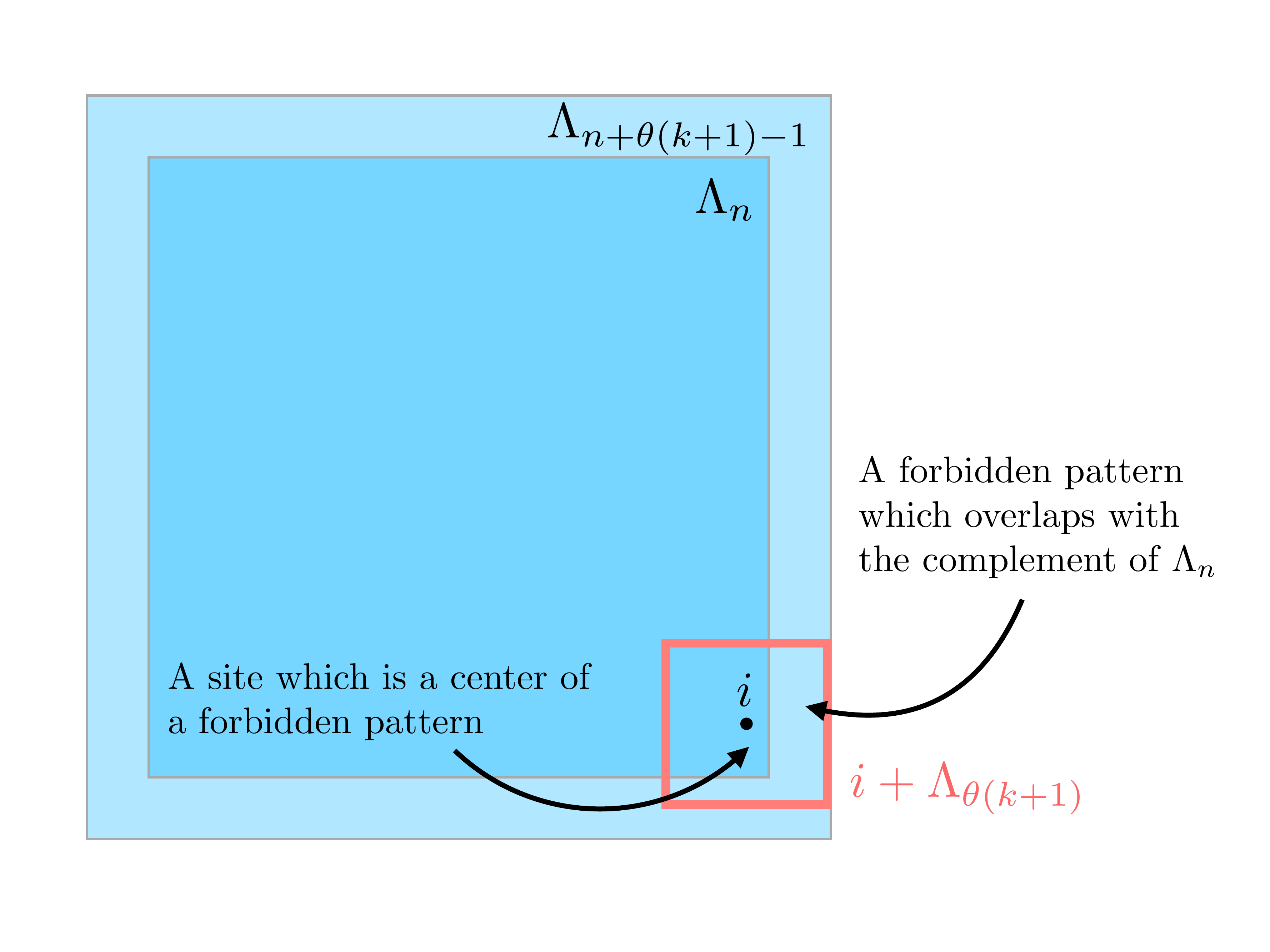}
\caption{Estimate of the number of non-admissible patterns.} \label{non admissible patterns}
\end{center}
\end{figure} 
 
\begin{proof}[Proof of Proposition \ref{oscillation}]
By contradiction. We suppose that there exist $\delta\in (0,1]$ and $k$ large enough such that
\begin{align}
\mu_{\beta_{k}}\left( [G_{\widetilde{Y}_-, \ell_{k+1}}] \right)\leq 1-\delta \quad \text{if}\  k\ \text{is even}, \label{contradiction}\\
\mu_{\beta_{k}}\left([G_{\widetilde{Y}_+, \ell_{k+1}}] \right)\leq 1-\delta \quad \text{if}\  k\ \text{is odd}. \nonumber
\end{align}
We only deal with the case when $k$ is even, since the other case is similar. 
Without loss of generality we may assume $k$ is large enough to satisfy $C' 2^{-2\ell_{k}}\leq \delta$
and $\Lambda_{\ell_{k+1}}\supset \Lambda_F$.
By \eqref{Sn2} and \eqref{contradiction} we have
\begin{align}
\frac{1}{\lambda_n}S_n \chi_{[G_{\widetilde{Y}_-, \ell_{k+1}}]}(y)
\leq 1-\delta + C'2^{-2\ell_{k}}
\leq 1-\left(1+\frac{1}{100}\right)\delta
\label{pattern from negative}
\end{align}
for every $y\in E_{n,k}$.

Let $h_{k+1}^-=\frac{1}{\lambda_{\ell_{k+1}}}\log\big|G_{\widetilde{Y}_-, \ell_{k+1}}\big|$.
Using  \eqref{GYplus} in Proposition \ref{pattern lemma}, we have
\[
\Big|G_{\widetilde{Y}_-, \ell_{k+1}}\Big|
\geq \sum_{\omega \in C_{-, k+1}}\e^{f_0(\omega)(2\ell_{k+1}-1)^2}
\geq \e^{\min_{\omega\in C_{-, k+1}} f_0(\omega)(2\ell_{k+1}-1)^2}.
\]
By the definition of $C_{-,\, k+1}$ we have
\begin{equation}\label{hkminus}
h_{k+1}^-=\frac{1}{\lambda_{\ell_{k+1}}}\log \Big|G_{\widetilde{Y}_-, \ell_{k+1}}\Big|
\geq\min_{\omega\in C_{-,k+1}} f_0(\omega)
\geq\frac{\lfloor \frac{m_{k+1}}{2}\rfloor}{\ell_{k+1}}
\geq O(\ell_{k}^{-1}).
\end{equation}

\begin{lemma} \label{estimation on exponential decay}
Take $N\geq1$ such that, for every $n\geq N$ and $y\in E_{n,k}$, \eqref{patterns-of-Enk} holds.
Then we have
\[
\frac{1}{\lambda_n}\log|E_{n,k}|\leq H\big(C'2^{-2\ell_{k}}\big)+ (1-\delta') h_{k+1}^- +o(h_{k+1}^-)
\]
where  $\delta'=\frac{9}{1000}\,\delta$.
\end{lemma}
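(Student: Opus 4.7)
The plan is to upper-bound $|E_{n,k}|$ by exploiting the vertical-extension structure of $\widetilde{Y}$: within any window of type $G_{\widetilde{Y},\ell_{k+1}}$, the $\Sigma$-coordinate of the input layer is constant along each column. Since, by Lemma~\ref{Enk}, all but a vanishing fraction of $\ell_{k+1}$-windows of any $y\in E_{n,k}$ lie in $G_{\widetilde{Y},\ell_{k+1}}$, the input layer of $y$ on $\Lambda_n$ is essentially determined by a single 1D configuration $x\in\Sigma^{2n-1}$, modulo the few bad positions where vertical constancy is violated.

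I would then enumerate $y\in E_{n,k}$ via the product of three factors: (i) the 1D input $x$; (ii) the $\{0,\widetilde{0}\}$-decoration of each $0$ in the 2D input layer, which contributes at most $2^{(\#0\text{'s in }x)\cdot(2n-1)}$ choices; and (iii) the computation-layer completion, which by Proposition~\ref{computation layer} contributes only a factor subexponential in $\lambda_n$. The bound \eqref{pattern from negative} then controls the fraction of $\ell_{k+1}$-windows of $x$ that are $X_-$-admissible by $1-(1-\tfrac{1}{100})\delta$ (up to the negligible $C'2^{-2\ell_k}$ correction), so the total length of $X_-$-admissible segments of $x$ is at most $\big(1-\tfrac{99}{100}\delta\big)(2n-1)$, after absorbing the vanishing bad fraction.

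Among these three factors, (ii) will be dominant: factor (i) grows only linearly in the horizontal extent $2n-1$, hence subexponentially in $\lambda_n=(2n-1)^2$, and factor (iii) is at most polynomial in $n$. Within an $X_+$-admissible segment of $x$, the rapidly decreasing thresholds $r_{k+1},r_{k+3},\ldots$ forced by (S2)--(S4) make the $0$-density asymptotically vanish, so such segments contribute $o(h_{k+1}^-)$ per 1D cell to factor (ii); within an $X_-$-admissible segment, the effective $0$-density at scale $\ell_{k+1}$ is captured precisely by the lower bound \eqref{hkminus}, producing $h_{k+1}^-/\log 2$ per cell. Summing these contributions yields
\[
\frac{1}{\lambda_n}\log|E_{n,k}|
\;\le\;
\Big(1-\tfrac{99}{100}\delta\Big)\,h_{k+1}^- \;+\; H\!\big(C'2^{-2\ell_k}\big) \;+\; o(h_{k+1}^-),
\]
and the claim follows because $\tfrac{99}{100}\delta > \tfrac{9}{1000}\delta = \delta'$. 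The binary-entropy term emerges naturally as the combinatorial cost of choosing where in $\Lambda_n$ the bad positions sit.

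The hard part will be making the informal decomposition of $x$ into $X_+$- and $X_-$-admissible segments rigorous in the presence of the bad positions: transitions between $X_+$- and $X_-$-type segments need not line up exactly with bad positions, and one has to control cleanly the interaction between the three counting factors (i)--(iii). The crucial asymmetry --- that $X_-$-segments at scale $\ell_{k+1}$ contribute $\Theta(h_{k+1}^-)$ per 1D cell, while $X_+$-segments of comparable length contribute only $o(h_{k+1}^-)$ per cell --- is exactly the property that the tailored construction of $X_\pm$ in Section~\ref{Construction-of-a-base-system} was designed to produce, and this is where the specific choices of $\ell_k$ and $r_k$ via (S2)--(S4) pay off.
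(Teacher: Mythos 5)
Your approach is genuinely different from the paper's, and it contains a flaw that kills the claimed stronger bound. The paper proves Lemma~\ref{estimation on exponential decay} by a direct two-dimensional pattern count: it first pays the binary-entropy cost $H(C'2^{-2\ell_k})$ for choosing the (at most $\lambda_n C'2^{-2\ell_k}$) bad positions, then essentially partitions the remaining good part of $\Lambda_n$ into roughly $\lambda_n/\lambda_{\ell_{k+1}}$ non-overlapping $\ell_{k+1}$-boxes, and bounds the number of ways to fill them by
\[
\sum_{r\geq \frac{\delta}{100}\lambda_n/\lambda_{\ell_{k+1}}} \binom{\lambda_n}{r}\,\big|G_{\widetilde{Y}_-,\ell_{k+1}}\big|^{\lambda_n/\lambda_{\ell_{k+1}}-r}\big|G_{\widetilde{Y}_+,\ell_{k+1}}\big|^{r},
\]
the lower bound on $r$ coming from \eqref{pattern from negative} together with \eqref{patterns-of-Enk}. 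It then invokes Proposition~\ref{pattern lemma} to replace $\big|G_{\widetilde{Y}_+,\ell_{k+1}}\big|$ by $\big|G_{\widetilde{Y}_-,\ell_{k+1}}\big|^{1/10}$, which is exactly where the constant $\delta'=\frac{9}{1000}\delta$ arises.

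Your plan never uses Proposition~\ref{pattern lemma}, and the step you substitute for it is wrong. You claim that $X_+$-admissible segments contribute $o(h_{k+1}^-)$ per cell because the $0$-density threshold is $r_{k+1}$, which you say is asymptotically negligible. But $r_{k+1}$ and $h_{k+1}^-$ are comparable, not asymptotically separated: condition (S3) gives $r_{k+1}\lesssim (40\ell_k)^{-1}$, while \eqref{hkminus} gives $h_{k+1}^-\gtrsim (2\ell_k)^{-1}$, so the ratio $r_{k+1}/h_{k+1}^-$ is a constant of order $1/20$, not $o(1)$. Said differently, the $10$-th power relation $\big|G_{\widetilde{Y}_+,\ell_{k+1}}\big|\leq\big|G_{\widetilde{Y}_-,\ell_{k+1}}\big|^{1/10}$ is precisely the quantitative statement one has and proves; it is a factor-ten gap, not a vanishing one. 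Consequently your intermediate bound $(1-\tfrac{99}{100}\delta)h_{k+1}^- + H(C'2^{-2\ell_k})+o(h_{k+1}^-)$ is not obtainable from the construction; the achievable coefficient is $1-\delta'$ with $\delta'=\frac{9}{1000}\delta = \frac{\delta}{100}\cdot\frac{9}{10}$, i.e., (fraction of $+$-windows) $\times$ (entropy saved per $+$-window).

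Two secondary issues. First, your reduction to a single 1D input $x$ followed by counting $\{0,\widetilde 0\}$-decorations is in effect re-deriving the identity $|G_{\widetilde{Y}_\pm,\ell_{k+1}}| = c_{\ell_{k+1}}\sum_{\omega\in\mathcal{P}_{X_\pm,\ell_{k+1}}}\e^{f_0(\omega)(2\ell_{k+1}-1)^2}$ that already sits inside the proof of Proposition~\ref{pattern lemma}; the paper's count works directly with the cardinalities $|G_{\widetilde{Y}_\pm,\ell_{k+1}}|$ and is cleaner. Second, you concede that the decomposition of $x$ into contiguous $X_+$- and $X_-$-segments, and its interaction with bad positions, is "the hard part" and leave it unresolved; the paper avoids this entirely by counting \emph{windows} (via $S_n\chi_{[G_{\widetilde{Y}_\pm,\ell_{k+1}}]}$) rather than contiguous segments, so no alignment problem arises. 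If you insert the $10$-th power relation where your $o(h_{k+1}^-)$ claim currently stands, and switch from segments to windows, your argument essentially collapses onto the paper's.
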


\begin{proof}
Fix $n\geq N$. For $y \in E_{n,k}$ the number of positions in $\Lambda_n$ for which a pattern from $\widetilde{\mathcal{B}}^{\Lambda_{\ell_{k+1}}}\setminus G_{\widetilde{Y},\, \ell_{k+1}}$ appears is bounded above by $ \lambda_n C' 2^{-2\ell_{k}}$ by \eqref{patterns-of-Enk}: 	
\[
\lambda_n\left(1-\frac{1}{\lambda_n} S_n \chi_{[G_{\widetilde{Y}, \ell_{k+1}}]} (y)\right)
\leq \lambda_n\, C' 2^{-2\ell_{k}}.
\]
The number of possible places for such positions is bounded by
\[
\sum_{r< \lambda_n C'2^{-2\ell_{k}}}\binom{\lambda_n}{r}\leq \e^{H(C'2^{-2\ell_{k}})\lambda_n}\lambda_n^2
\]
(See Appendix \ref{appendix-binom}).
Since $G_{\widetilde{Y}\!,\, \ell_{k+1}}$ is the disjoint union of $G_{\widetilde{Y}_-, \ell_{k+1}}$ and $G_{\widetilde{Y}_+, \ell_{k+1}}$,
\eqref{pattern from negative} and \eqref{patterns-of-Enk} imply
\begin{align*}
S_n \chi_{[G_{Y_+', \ell_{k+1}}]} (y)
&=S_n \chi_{[G_{\widetilde{Y}, \ell_{k+1}}]} (y)-S_n \chi_{[G_{Y_-', \ell_{k+1}}]} (y)\\
&\geq \lambda_n\big(1-C' 2^{-2\ell_{k}}\big) -\lambda_n\left(1-\Big(1+\frac{1}{100}\Big)\delta\right)\\
& \geq \frac{\delta}{100} \lambda_n.
\end{align*}
Considering overlapping parts, the possible choices of patterns from $G_{\widetilde{Y}_+, \ell_{k+1}}$
is  bounded above by $\lambda_n \lambda_{\ell_{k+1}}^{-1}$
and is bounded below by $\frac{\delta}{100}\frac{\lambda_n}{\lambda_{\ell_{k+1}}}$.
	
Hence the number of ways to choose patterns from $G_{\widetilde{Y}, \ell_{k+1}}$ is bounded by
\begin{align*}
& \sum_{r=\frac{\delta}{100} \frac{\lambda_n}{\lambda_{\ell_{k+1}}}}^{\frac{\lambda_n}{\lambda_{\ell_{k+1}}}}
\binom{\lambda_n}{r}\, \Big|G_{\widetilde{Y}_-, \ell_{k+1}}\Big|^{ \frac{\lambda_n}{\lambda_{\ell_{k+1}}}-r}\, \Big|G_{\widetilde{Y}_+, \ell_{k+1}}\Big|^r \\
& \leq \sum_{r=\frac{\delta}{100} \frac{\lambda_n}{\lambda_{\ell_{k+1}}}}^{\frac{\lambda_n}{\lambda_{\ell_{k+1}}}}
\binom{\lambda_n}{r} \, \Big|G_{\widetilde{Y}_-, \ell_{k+1}}\Big|^{ \frac{\lambda_n}{\lambda_{\ell_{k+1}}}-r}\, \Big|G_{\widetilde{Y}_-, \ell_{k+1}}\Big|^{\frac{r}{10}}\\
& =\sum_{r=\frac{\delta}{100} \frac{\lambda_n}{\lambda_{\ell_{k+1}}}}^{\frac{\lambda_n}{\lambda_{\ell_{k+1}}}}
\binom{\lambda_n}{r}\, \Big|G_{\widetilde{Y}_-, \ell_{k+1}}\Big|^{ \frac{\lambda_n}{\lambda_{\ell_{k+1}}}-\frac{9}{10}r}\\
& \leq \Big|G_{\widetilde{Y}_-, \ell_{k+1}}\Big|^{ \frac{\lambda_n}{\lambda_{\ell_{k+1}}}(1-\frac{9}{10}\frac{\delta}{100})}
\sum_{r=\frac{\delta}{100} \frac{\lambda_n}{\lambda_{\ell_{k+1}}}}^{\frac{\lambda_n}{\lambda_{\ell_{k+1}}}}
\binom{\lambda_n}{r}\\
& \leq \Big|G_{\widetilde{Y}_-, \ell_{k+1}}\Big|^{ \frac{\lambda_n}{\lambda_{\ell_K}}(1-\delta')} \frac{\delta}{100} \left(\frac{\lambda_n}{\lambda_{\ell_{k+1}}} \right)^2 \e^{\frac{\lambda_n}{\lambda_{\ell_{k+1}}}H\big(\frac{\delta}{100}\big)}.
\end{align*}
Hence
\[
|E_{n, k}|\leq e^{H(C'2^{-2\ell_{k}})\lambda_n}\lambda_n^2
\big|G_{\widetilde{Y}_-, \ell_{k+1}}\big|^{ \frac{\lambda_n}{\lambda_{\ell_{k+1}}}(1-\delta')} \frac{\delta}{100} \left(\frac{\lambda_n}{\lambda_{\ell_{k+1}}} \right)^2 
\e^{\frac{\lambda_n}{\lambda_{\ell_{k+1}}}H\big(\frac{\delta}{100}\big)}.
\]
By taking logarithm and dividing out by $\lambda_n$ we get
\begin{align*}
\frac{1}{\lambda_n}\log|E_{n, k}|
&\leq  H\big(C'2^{-2\ell_{k}}\big)
+ \frac{4}{\lambda_n}\log{\lambda_n} + (1-\delta')  \frac{1}{\lambda_{\ell_{k+1}}}\log \Big|G_{\widetilde{Y}_-, \ell_{k+1}}\Big|\\
& \quad + \frac{1}{\lambda_n}\log \frac{\delta}{100}+2\log\frac{1}{\lambda_{\ell_{k+1}}}+ \frac{1}{\lambda_{\ell_{k+1}}} H\left(\frac{\delta}{100}\right).
\end{align*}
Since $\delta\in [0,1]$, we have $\lambda_n^{-1} \log \delta(100)^{-1} <0$.
Since $\log x \leq x-1$ for every $x>0$, $H(t)\leq \log2\leq 1$ and $\lambda_n\geq9$ for $n\geq2$, we have
\[
2\log\frac{1}{\lambda_{\ell_{k+1}}}+ \frac{1}{\lambda_{\ell_{k+1}}} H\left(\frac{\delta}{100}\right)
\leq  2\left(\frac{1}{\lambda_{\ell_{k+1}}}-1\right)+\frac{1}{\lambda_{\ell_{k+1}}}
=\frac{3}{\lambda_{\ell_{k+1}}}-2\leq 0.
\]
For all $n$ large enough we have	
\[
\frac{4}{\lambda_n} \log \lambda_n \leq 2^{-2 \ell_{k}}.
\]
Hence we have						
\[
\frac{1}{\lambda_n}\log|E_{n, k}|
\leq H\big(C'2^{-2\ell_{k}}\big)+ (1-\delta') h_{k+1}^-+2^{-2\ell_{k}}
\]
for all $n$ large enough.
\end{proof}

Now we can finish the proof. 
We have a probability measure $\nu_{k+1}^-$ which is invariant ergodic under $\sigma_{\Lambda_{\ell_{k+1}}}$ and whose entropy is $h_{k+1}^-=\log|G_{\widetilde{Y}, \ell_{k+1}}|$ and 
whose support is included in $[G_{\widetilde{Y}, \ell_{k+1}}]$. (See Appendix \ref{appendix-pressure} for details.)
Since no forbidden pattern appears in $\Lambda_{\ell_{k+1}}$ for an element in $[G_{\widetilde{Y}, \ell_{k+1}}]$, we have
\[
\int S_{\ell_{k+1}}\phi \dd\nu_{k+1}^-
=\int_{[G_{\widetilde{Y}_-, \ell_{k+1}}]} \phi \dd\nu_{k+1}^-
\geq -\# (\Lambda_{\ell_{k+1}}\setminus \Lambda_{\ell_{k+1}-1})
=-8(\ell_{k+1}-1).
\]	
By the variational principle we have 
\begin{align*}
P(\beta_{k}\phi) & =P(\sigma, \beta_{k} \phi)
=\lambda_{\ell_{k+1}}^{-1} P(\sigma_{\Lambda_{\ell_{k+1}}}, \beta_{k} S_{\ell_{k+1}}\phi)\\
& \geq \frac{h(\nu_{k+1}^-)}{\lambda_{\ell_{k+1}}}+\frac{\beta_k}{\lambda_{\ell_{k+1}}} \int S_{\ell_{k+1}}\phi \dd \nu_{k+1}^-.
\end{align*}
The `block' shift $\sigma_{\Lambda_{\ell_k}}$ is defined in Appendix \ref{appendix-pressure}.
The second equality is a general fact. (We refer \cite[Theorem 9.8]{Wal82} for a proof in the case of a continuous transformation of a compact metric space and a continuous function $\phi$. The proof in the present context is obtained by combining the proof of Theorem 9.8 in \cite{Wal82} and Section 4.4 in \cite{Kel98}.)
Since $\mu_{\beta_{k}}$ is an equilibrium state for $\beta_{k}\phi$, we have
\begin{align}
h_{k+1}^-=\lambda_{\ell_{k+1}}^{-1}h(\nu_{k+1}^-)
&\leq P(\beta_{k}\phi)-\frac{\beta_k}{\lambda_{\ell_{k+1}}} \int S_{\ell_{k+1}}\phi \dd \nu_{k+1}^- \nonumber\\
&\leq h(\mu_{\beta_{k}}) +\beta_{k} \int \phi \dd\mu_{\beta_{k}}+\frac{\beta_k}{(2\ell_{k+1}-1)^2}\times 8(\ell_{k+1}-1) \nonumber\\
&\leq h(\mu_{\beta_{k}})+O(2^{-\ell_k}). 
\label{entropy inequation}
\end{align}
We get the last inequality because $ \phi\leq0$ and (S4).
The $L^1$ version of Shannon-McMillan-Breiman theorem \cite{Kel98} tells us that, for any $\eta>0$, there exists $n_0$
such that for any $n\geq n_0$ 
\[
\sum_{p\in \widetilde{\mathcal B}^{\Lambda_n}} \left| -\frac{1}{\lambda_n}\log\mu_{\beta_{k}}([p])-h(\mu_{\beta_{k}})\right| \mu_{\beta_{k}}([p])\leq \eta.
\]
In particular
\begin{align*}
\eta
&\geq \sum_{p\in [E_{n,k}]} \left| -\frac{1}{\lambda_n}\log\mu_{\beta_{k}}([p])-h(\mu_{\beta_{k}})\right| \mu_{\beta_{k}}([p])\\
& \geq \left|\sum_{p\in [E_{n,k}]} \Big( -\frac{1}{\lambda_n}\log\mu_{\beta_{k}}([p])-h(\mu_{\beta_{k}}) \Big)\mu_{\beta_{k}}([p])\right|
\end{align*}
from which it follows, by taking $\eta=2^{-{\ell_{k}}}$, that
\[
h(\mu_{\beta_{k}}) \mu_{\beta_{k}} ([E_{n,k}])
 \leq  \sum_{p\in [E_{n, k}]} \left(-\frac{1}{\lambda_n}\mu_{\beta_{k}}([p])\log \mu_{\beta_{k}}( [p]) \right) + 2^{-{\ell_{k}}}\,.
\]
Using Jensen inequality with the concave function $t\mapsto -t \log t$ and the weights $1/|[E_{n,k}]|$ we get
\[
h(\mu_{\beta_{k}}) \mu_{\beta_{k}} ([E_{n,k}])
\leq \frac{1}{\lambda_n}\log|[E_{n, k}]|-\frac{1}{\lambda_n} \mu_{\beta_{k}}([E_{n,k}] ) \log\mu_{\beta_{k}}([E_{n,k}] ) + 2^{-{\ell_{k}}}\,.
\]
We now apply Lemma \ref{estimate of Enk} with $\varepsilon=\frac{\delta'}{2}$ to get from \eqref{entropy inequation} that, for $n$ large enough,
\[
h_{k+1}^-\left(1-\frac{\delta'}{2}\right)
\leq \frac{1}{\lambda_n}\log|[E_{n, k}]|-\frac{1}{\lambda_n} \left(1-\frac{\delta'}{2}\right)\log\left(1-\frac{\delta'}{2}\right) + 2^{-{\ell_{k}}}+O(2^{-\ell_k}).
\]
Then Lemma \ref{estimation on exponential decay} implies
\[
h_{k+1}^-\left(1-\frac{\delta'}{2}\right)
\leq H\big(C'2^{-2\ell_{k}}\big)+ (1-\delta') h_{k+1}^-+o(h_{k+1}^-)+O(2^{-\ell_k}).
\]
Dividing out by $h_{k+1}^-$, we have
\begin{equation}
1-\frac{\delta'}{2} \leq 1-\delta'+ \frac{H\big(C'2^{-2\ell_{k}}\big)}{h_{k+1}^-}+\frac{O(2^{-\ell_k})}{h_{k+1}^-}+o(1).\label{final_estimate}
\end{equation}
Using \eqref{hkminus} and then letting $k\to \infty$ in \eqref{final_estimate}, we get a contradiction. 
\end{proof}

%%%%%%%%%%%%%%%%%%%%%%%%%%%%%%%%%%%%%%%%%%%%%%%%%%%%%%%%
%%%%%%%%%%%%%%%%%%%%%%%%%%%%%%%%%%%%%%%%%%%%%%%%%%%%%%%%
\appendix

\section{On partial sums of binomial coefficients}\label{appendix-binom}

Let $n\geq 1$ and $\alpha\in \left]0,1/2\right]$.
We have the following obvious lower bound:
\[
\sum_{r=0}^{\alpha n} \binom{n}{r}	 
\geq \sum_{r=0}^{\alpha n} \binom{\lfloor \alpha n\rfloor}{r}	
=2^{\lfloor \alpha n\rfloor}. 
\]
(Notice that we make a slight abuse of notation by writing $\alpha n$ instead of $\lfloor \alpha n \rfloor$.) \newline
Another bound we use is the following:
\[
\sum_{r=0}^{\alpha n} \binom{n}{r}\leq \e^{n H(\alpha)}
\]
where $H(\alpha)$ is defined in \eqref{def-binary-entropy}. 
For the reader's convenience, we give a proof since we could not find a handy reference.  
A straightforward calculation using the fact that
$\log \alpha-\log(1-\alpha)\leq 0$ for $\alpha\leq 1/2$ shows that for all $r\in \left[0, \alpha n\right]$
\[
r\log \alpha + (n-r)\log (1-\alpha) \geq -n H(\alpha)\,.
\]
Hence $\alpha^r (1-\alpha)^{n-r}\geq \e^{-n H(\alpha)}$, therefore
\[
1= (\alpha+(1-\alpha))^n
\geq \sum_{r=0}^{\alpha n} \binom{n}{r} \alpha^r (1-\alpha)^{n-r}
\geq \sum_{r=0}^{\alpha n} \binom{n}{r} \e^{-n H(\alpha)}
\]
which proves the desired bound.

\section{A measure with entropy $\log  |G_{\widetilde{Y}_{\pm}, \ell_{k+1}}|/\lambda_{\ell_{k+1}}$}\label{appendix-pressure}

Let $\mathcal{S}$ be an alphabet. Take $m\geq 1$ and consider the alphabet $\mathcal{T}=\mathcal{S}^{\Lambda_m}$.
Define  $\sigma_{\Lambda_m} : \mathcal{S}^{\Z^2}\rightarrow \mathcal{S}^{\Z^2}$ by 
\[
(\sigma_{\Lambda_m}^{\boldi}(x))_{\boldj}=x_{\boldi\Lambda_m+\boldj}=x_{(j_1+(2m+1)i_1, j_2+(2m+1)i_2)}
\]
for $\boldi =(i_1, i_2)\in \Z^2$ and $\boldj=(j_1, j_2)\in \Lambda_m$.
By $\boldi\Lambda_m$ we mean the dilated square $\{i_1(-m+1),\ldots,0,\ldots,i_1(m-1)\}\times \{i_2(-m+1),\ldots,0,\ldots,i_2(m-1)\}$.
The full shift over $\mathcal{T}$ is topologically conjugate with $(\mathcal{A}, \sigma_{\Lambda_m})$
and the conjugacies are the following:
$\mathfrak{f}: \mathcal{S}^{\Z^2}\rightarrow \mathcal{T}^{\Z^2}$ defined by 
$(\mathfrak{f}(x))_{\boldi}= (\sigma_{\Lambda_m}^{\boldi} x)_{\Lambda_m}$ and 
$\mathfrak{g}: \mathcal{T}^{\Z^2} \rightarrow \mathcal{S}^{\Z^2}$ defined by
$\left(\mathfrak{g}(y)\right)_{\boldi \Lambda_m+\boldj}=$ the alphabet at $\boldj$ in $\Lambda_m$ for $y_{\boldi}$.

Let $\mathcal{P}\subset \mathcal{T}$.
Then we have
\begin{align}
h_{{\scriptscriptstyle top}}(Z, \sigma_{\Lambda_m})=\log|\mathcal{P}|. 
\label{topological entropy}
\end{align}

%
%\textbf{This $\nu$ is $\nu_{k+1}^-$ above.}
Let $\tilde{\nu}$ be an ergodic measure on the full shift over $\mathcal{P}$ whose entropy is $\log|\mathcal{P}|$.
Let $\nu$ be the push forward of $\tilde{\nu}$ by $\mathfrak{g}$.
Note that $\nu$ is $\sigma_{\Lambda_m}$-invariant.

In Section \ref{Non-convergence} we considered 
$\mathcal{S}=\widetilde{\mathcal{B}}$, $m=\ell_k$ and $\mathcal{P}=G_{\widetilde{Y}_{-}, \ell_k}$
and the $\sigma_{\Lambda_{m}}$-invariant measure $\nu$ is $\nu_{k+1}^-$.

Since a generating $(n,1)$-separated set for $(X, \sigma_{\Lambda_m})$ is $((2m+1)n, 1)$-separating set for $(X, \sigma)$,
we have
\[
P(\sigma_{\Lambda_m}, S_m\phi)=\lambda_m P(\sigma, \phi)
\]
for every continuous function $\phi$ on $\mathcal{A}^{\Z^2}$. See \cite[Section 4.4]{Kel98} for details.
Since the support of $\tilde{\nu}$ is $\mathcal{P}^{\Z^2}$ and $\mathfrak{g}(\mathcal{P}^{\Z^2})\subset [\mathcal{P}]$, 
the support of $\nu$ is included in $[\mathcal{P}]$.

%%%%%%%%%%%%%%%%%%%%%%%%%%%%%%%%%%%%%%%%%%%%%%%%%%%%%%%%
\bibliographystyle{abbrv}
\bibliography{reference_nonconvergence.bib}

\end{document}